\newcommand{\itemspacing}{\setlength{\itemsep}{0pt}}
\renewcommand{\geq}{\geqslant}
\renewcommand{\leq}{\leqslant}
\newcommand{\Bool}{\ensuremath{\{0,1\}}}
\newcommand{\imp}{\rightarrow}
\newcommand{\OR}{\ensuremath{\mathrm{OR}}}
\newcommand{\NAND}{\ensuremath{\mathrm{NAND}}}
\newcommand{\ORconj}{\ensuremath{\OR{}\text{-conj}}}
\newcommand{\NANDconj}{\ensuremath{\NAND{}\text{-conj}}}
\newcommand{\NANDconjLinebreak}{NAND-conj} 
\newcommand{\IMconj}{\ensuremath{\text{IM-conj}}}
\newcommand{\bwcomp}[1]{{\widetilde{#1}}}
\newcommand{\proj}[1]{{\mathrm{proj}_{#1}}}
\newcommand{\powerset}[1]{{\mathcal{P}({#1})}}
\newcommand{\pppleq}{\leq_{\mathrm{ppp}}}
\newcommand{\width}[1]{{\mathrm{width}\,({#1})}}
\newcommand{\vrank}[1]{{\mathrm{vrank}\,({#1})}}
\newcommand{\Ror}{R_{\mathrm{OR}}}
\newcommand{\Rnand}{R_{\mathrm{NAND}}}
\newcommand{\Rork}[1][k]{R_{\mathrm{OR}, {#1}}}
\newcommand{\Rnandk}[1][k]{R_{\mathrm{NAND}, {#1}}}
\newcommand{\Rimp}{R_{\imp}}
\newcommand{\Rpmi}{R_{\leftarrow}}
\newcommand{\Rneq}{R_{\neq}}
\newcommand{\Req}{R_{=}}
\newcommand{\Reqk}[1][k]{R_{{=},{#1}}}
\newcommand{\Rzero}{R_{\mathrm{zero}}}
\newcommand{\Rone}{R_{\mathrm{one}}}
\newcommand{\GammaPin}{\ensuremath{\Gamma_\mathrm{\!pin}}}
\newcommand{\tuple}[1]{\left\langle\,{#1}\,\right\rangle}
\newcommand{\abar}{\bar{a}}
\newcommand{\bbar}{\bar{b}}
\newcommand{\cbar}{\bar{c}}
\newcommand{\dbar}{\bar{d}}
\newcommand{\vbar}{\bar{v}}
\newcommand{\xbar}{\bar{x}}
\newcommand{\numBIS}{\ensuremath{\#\mathrm{BIS}}}
\newcommand{\numSAT}{\ensuremath{\#\mathrm{SAT}}}
\newcommand{\numwHIS}{\ensuremath{\#w\mathrm{\text{-}HIS}}}
\newcommand{\numwHISd}[1][d]{\ensuremath{\#w\mathrm{\text{-}HIS}_{#1}}}
\newcommand{\CSP}{\ensuremath{\mathrm{CSP}}}
\newcommand{\numCSP}{\ensuremath{\#\mathrm{CSP}}}
\newcommand{\numCSPd}[1][d]{\ensuremath{\numCSP_{#1}}}
\newcommand{\Ptime}{\ensuremath{\mathbf{P}}}
\newcommand{\numP}{\ensuremath{\#\mathbf{P}}}
\newcommand{\FPtime}{\ensuremath{\mathbf{FP}}}
\newcommand{\NPtime}{\ensuremath{\mathbf{NP}}}
\newcommand{\RPtime}{\ensuremath{\mathbf{RP}}}
\newcommand{\APredto}{\leq_{\mathrm{AP}}}
\newcommand{\APred}{\leq_{\mathrm{AP}}}
\newcommand{\APequiv}{\equiv_{\mathrm{AP}}}
\newcommand{\GFtwo}{\ensuremath{\mathrm{GF}_2}}
\newcommand{\Nesetril}{Ne\v{s}et\v{r}il}
\theoremstyle{plain}
\newtheorem{theorem}{Theorem}[]
\newtheorem{lemma}[theorem]{Lemma}
\newtheorem{corollary}[theorem]{Corollary}
\newtheorem{proposition}[theorem]{Proposition}
\theoremstyle{definition}
\newtheorem*{proviso*}{Proviso}
\title {The Complexity of Approximating\\
        Bounded-Degree Boolean \#CSP\thanks{The work described in this paper was partly supported by EPSRC Research Grant
(refs\ EP/I011528/1  and EP/I012087/1) ``Computational Counting''$\!$.}}
\author{%
    \ Martin Dyer,%
    \thanks{~School of Computing, University of Leeds, Leeds, LS2~9JT, UK.}
    \ Leslie Ann Goldberg,%
    \thanks{~Dept of Computer Science, University of Liverpool,
        Liverpool, L69~3BX, UK.}
    \\Markus Jalsenius%
    \thanks{~Dept of Computer Science, University of Bristol,
        Merchant Venturers Building, Bristol, BS8~1UB, UK.}
    \ and David Richerby%
    \footnotemark[3]
}
\date{}
\begin{document}
\maketitle

\begin{abstract}
    The degree of a CSP instance is the maximum number of times that
    any variable appears in the scopes of constraints. We consider the
    approximate counting problem for Boolean CSP with bounded-degree
    instances, for constraint languages containing the two unary
    constant relations $\{0\}$ and $\{1\}$. When the maximum allowed
    degree is large enough (at least $6$) we obtain a complete
    classification of the complexity of this problem.  It is exactly
    solvable in polynomial-time if every relation in the constraint
    language is affine. It is equivalent to the problem of
    approximately counting independent sets in bipartite graphs if
    every relation can be expressed as conjunctions of $\{0\}$,
    $\{1\}$ and binary implication.  Otherwise, there is no FPRAS
    unless $\NPtime = \RPtime$. For lower degree bounds, additional
    cases arise, where the complexity is related to the complexity
    of approximately counting independent sets in hypergraphs.
\end{abstract}


\section{Introduction}
\label{sec:Intro}

In the constraint satisfaction problem (CSP), we seek to assign values
from some domain to a set of variables, while satisfying given
constraints on the combinations of values that certain tuples of the
variables may take.  Constraint satisfaction problems are ubiquitous
in computer science, with close connections to graph theory, database
query evaluation, type inference, satisfiability, scheduling and
artificial intelligence \cite{KV2000:Conjunctive,
Kum1992:CSP-algorithms, Mon1974:Constraints}.  CSP can also be
reformulated in terms of homomorphisms between relational structures
\cite{FV1998:MMSNP} and conjunctive query containment in database
theory \cite{KV2000:Conjunctive}.  Weighted versions of CSP
appear in statistical physics, where they correspond to partition
functions of spin systems \cite{Wel1993:Complexity}.

We give formal definitions in Section~\ref{sec:Prelim} but, for now,
consider an undirected graph $G$ and the CSP where the domain is
$\{\mathrm{red}, \mathrm{green}, \mathrm{blue}\}$, the variables are
the vertices of $G$ and the constraints specify that, for every edge
$xy\in G$, $x$ and $y$ must be assigned different values.  Thus, in a
satisfying assignment, no two adjacent vertices are given the same
colour: the CSP is satisfiable if, and only if, the graph is
3-colourable.  As a second example, given a formula in 3-CNF, we can
write a system of constraints over the variables, with domain
$\{\mathrm{true}, \mathrm{false}\}$, that requires the assignment to
each clause of the formula to satisfy at least one literal.
Clearly, the resulting CSP is directly equivalent to the original
satisfiability problem.


\subsection{Decision CSP}
\label{sec:Decision}

In the \emph{uniform constraint satisfaction problem}, we are given
the set of constraints explicitly, as lists of allowable combinations
for given tuples of the variables; these lists can be considered as
relations over the domain.  Since it includes problems such as
3-\textsc{sat} and 3-\textsc{colourability}, uniform CSP is
\NPtime{}-complete.  However, uniform CSP also includes problems in
\Ptime{}, such as 2-\textsc{sat} and 2-\textsc{colourability}, raising
the natural question of what restrictions lead to tractable problems.
It is natural to restrict either the form of the constraints or of the
instances.

The most common restriction is to allow only certain fixed
relations in the constraints.  The list of allowed relations is
known as the \emph{constraint language} and we write $\CSP(\Gamma)$
for the so-called \emph{non-uniform} CSP in which each constraint
states that the values assigned to some tuple of variables must be a
tuple in a specified relation in $\Gamma$.

The classic example of this is due to Schaefer \cite{Sch1978:Boolean}.
Restricting to Boolean constraint languages (i.e., those with domain
$\{0,1\}$), he showed that $\CSP(\Gamma)$ is in $\Ptime$ if $\Gamma$ is
included in one of six classes and is \NPtime{}-complete, otherwise.
The Boolean case of CSP is often referred to as ``generalized
satisfiability'' in the literature.  More recently, Bulatov has
produced a corresponding dichotomy for three-element domains
\cite{Bul2006:Ternary}.

Restricting to relations of fixed arity over arbitrary finite domains
has also been studied in depth. In particular, requiring $\Gamma$ to be a
single binary relation gives the directed graph homomorphism problem,
and the undirected graph homomorphism problem if the relation is also
required to be symmetric.  Hell and \Nesetril{} have shown that, for
every symmetric binary relation $E$, $\CSP(E)$ is either in \Ptime{}
or is \NPtime{}-complete~\cite{HN1990:H-color}.  They conjecture that
this holds for all binary relations.

In all the above cases, $\CSP(\Gamma)$ has been either in \Ptime{} or
\NPtime{}-complete and Feder and Vardi have conjectured that this
holds for all $\Gamma$~\cite{FV1998:MMSNP}.  No such dichotomy can
exist for the whole
of \NPtime{} because Ladner has shown that either $\Ptime{} =
\NPtime{}$ or there is an infinite, strict hierarchy between the two
\cite{Lad1975:Reducibility}.  However, a dichotomy for CSP is possible
as there are problems in \NPtime{}, such as graph Hamiltonicity and
even connectedness, that cannot be expressed as
$\CSP(\Gamma)$\footnote{This follows from the observation that any set
  $S$ of structures (e.g., graphs) that is definable in CSP has the
  property that, if $A\in S$ and there is a homomorphism $B\to A$,
  then $B\in S$; neither the set of Hamiltonian nor connected graphs
  has this property.} and Ladner's diagonalization does not seem to be
expressible in CSP \cite{FV1998:MMSNP}.  Resolving Hell and
\Nesetril{}'s conjecture for a class of simple acyclic digraphs would
immediately resolve the CSP dichotomy \cite{FV1998:MMSNP}, though
recent work on the dichotomy has focused on methods from universal
algebra --- see, for example, \cite{Bul2006:Ternary,
CJ2006:Constraints} and the references there.

Allowing arbitrary constraint languages but restricting the form of
the instances has also been studied. Dechter and Pearl
\cite{DP1989:Clustering} and Freuder \cite{Fre1990:CSP-tw} have shown
that even uniform CSP is in \Ptime{} on instances of bounded tree
width; see also \cite{KV2000:Games-CSP}.  Bounded tree width and other
similar restrictions are generalized by the ``guarded decompositions''
of Cohen, Jeavons and Gyssens~\cite{CJG2008:Guarded}.  Restricting the degree of
instances (the maximum number of times that each variable may appear
in the scopes of constraints) is incomparable but not much is known in
this case.  In the non-uniform Boolean case, Dalmau and Ford have
shown that, as long as $\Gamma$ contains the relations~$\Rzero=\{0\}$
and $\Rone=\{1\}$, $\CSP(\Gamma)$ for instances of degree at most
three has the same complexity as the case with no degree restrictions
\cite{DF2003:bdeg-gensat}. The degree-two case has not yet been
completely classified, though it is known that degree-2 $\CSP(\Gamma)$ is as
hard as general $\CSP(\Gamma)$ whenever $\Gamma$ contains $\Rzero$ and
$\Rone$ and some relation that is not a $\Delta$-matroid
\cite{Fed2001:Fanout,DF2003:bdeg-gensat}.


\subsection{Counting CSP}

A generalization of the classical constraint satisfaction problem is
to ask how many satisfying solutions there are, rather than just
whether the constraints are satisfiable.  This is referred to as the
counting CSP problem, \numCSP{}.  Clearly, the decision problem is
reducible to counting: if we can efficiently count the
solutions, we can efficiently determine whether there is at least one.
However, the converse does not hold: for example, there are well-known
polynomial-time algorithms that determine whether a graph
admits a perfect matching but it is \numP{}-complete to count the
perfect matchings, even in a bipartite graph
\cite{Val1979:Enumeration}.

The class \numP{} can be considered to be the counting analogue of
\NPtime{}: it is defined as the class of functions $f$ for which
there is a nondeterministic, polynomial-time Turing machine that has
exactly $f(x)$ accepting paths for every input $x$
\cite{Val1979:Permanent}.  The counting version of any \NPtime{}
decision problem is easily seen to be in \numP{}.  Note that, although
\numP{} plays a similar role in the complexity of function problems to
that of \NPtime{} in decision problems, problems that are complete for
\numP{} under appropriate reductions are, under standard
complexity-theoretic assumptions, considerably harder than
\NPtime{}-complete problems.  Toda has shown that $\Ptime^{\numP}$
includes the whole of the polynomial hierarchy \cite{Tod1989:PH},
but $\Ptime^{\NPtime}$ is generally thought not to.

Although it is not known if there is a dichotomy for CSP, Bulatov has
recently shown that, for every $\Gamma\!$, $\numCSP(\Gamma)$ is either
computable in polynomial time or \numP{}-complete
\cite{Bul2008:Dichotomy}.  Two of the present authors have since given
an elementary proof of this result and also shown the dichotomy to be
decidable~\cite{DR2011:dichotomy}.
However, it is not obvious how the methods of these results could
be applied to bounded-degree \numCSP{}.

So, although there is a full dichotomy for $\numCSP(\Gamma)$, results
for restricted forms of constraint language are still of interest.
For Boolean constraint languages, Creignou and Hermann have shown that
only one of Schaefer's polynomial-time cases survives the transition
to counting: $\numCSP(\Gamma)$ has a polynomial time algorithm if
every relation in $\Gamma$ is affine (i.e., the solution set of a
system of linear equations over \GFtwo{}) and is \numP{}-complete,
otherwise \cite{CH1996:Bool-numCSP}.  It is not surprising that there
are fewer tractable cases --- it is easy to arrange that every
instance of $\CSP(\Gamma)$ be trivially satisfiable (say, by
making the all-zeroes assignment satisfying), but the number of non-trivial
solutions might be difficult to compute.  Dyer, Goldberg and
Jerrum~\cite{DGJ2009:WBool} extended Creignou and Hermann's result to
weighted Boolean $\numCSP$. Cai, Lu and
Xia~\cite{CLXxxxx:Complex-numCSP,CLX2009:Holant-numCSP}
extended further to the case of complex weights and show that the
dichotomy holds for the restriction of the problem in which instances
have degree~$3$. Their result implies that the degree-3 problem
$\numCSP_3(\Gamma)$ ($\numCSP(\Gamma)$ restricted to instances of
degree~3) has a polynomial time algorithm if every relation in
$\Gamma$ is affine and is \numP{}-complete, otherwise.

The case where $\Gamma$ contains a single symmetric, binary relation
$E$ corresponds exactly to the problem of counting the homomorphisms
from an input graph to some fixed undirected graph $H$, also known as
the counting $H$-colouring problem.  Dyer and Greenhill have shown
that $\numCSP(\{E\})$ is in polynomial time if $E$ is a complete
relation or defines a complete bipartite graph and is \numP{}-complete
otherwise \cite{DG2000:num-Hcol}.  The dichotomy for directed acyclic
graphs has been characterized by Dyer, Goldberg and Paterson
\cite{DGP2007:num-DAGcol} and, more recently, Cai and Chen have shown
a dichotomy for all directed graphs, even with
non-negative algebraic weights \cite{CC2010:num-dircol}.  In contrast
to the decision problem, it is not known whether a direct proof of the
dichotomy for general directed graphs would yield an alternative proof
of the dichotomy for arbitrary constraint languages.

Restricting the tree-width of instances has a dramatic
effect.  In the case of counting $H$-colourings, restricting the
instance to be a graph of tree-width at most $k$ makes the problem
solvable in linear time for any graph $H$, a result due to
D{\'\i}az, Serna and Thilikos \cite{DST2002:H-col-TW}.  This result
follows immediately from Courcelle's theorem, which says that, if a
decision problem is definable in monadic second-order logic (which
$H$-colouring is, for any fixed $H$), then both it and the
corresponding counting problem are computable in linear time
\cite{Cou1990:MSO-tw,CMR2001:MSO-FPT}.  However, invocations of
Courcelle's theorem hide enormous constants in the notation
$\mathcal{O}(n)$ (in this case, a tower of twos of height $|H|$),
while the work of D{\'\i}az et al.\@ not only yields practical
constants but can also be applied to classes of instances where the
tree-width is allowed to grow logarithmically with the order of the
graph, rather than being constant.


\subsection{Approximate counting}

Since $\numCSP(\Gamma)$ is very often \numP{}-complete, approximation algorithms play an important role.  The key concept is that of a \emph{fully polynomial randomized approximation scheme} (FPRAS).  This is a randomized algorithm for computing some function $f(x)$, taking as its input $x$ and a constant $\epsilon > 0$, and computing a value $Y$ such that $e^{-\epsilon} \leq Y/f(x) \leq e^\epsilon$ with probability at least $\tfrac{3}{4}$, in time polynomial in both $|x|$ and ${\epsilon}^{-1}$. (See Section~\ref{sec:Prelim:Approx} for details.)

Dyer, Goldberg and Jerrum have classified the complexity of
approximately computing $\numCSP(\Gamma)$ for Boolean constraint
languages \cite{DGJ2010:Bool-approx}.  When all relations
in $\Gamma$ are affine, $\numCSP(\Gamma)$ can be computed exactly in
polynomial time by the result of Creignou and Hermann discussed above
\cite{CH1996:Bool-numCSP}.  Otherwise, if every relation in $\Gamma$
can be defined by a conjunction of Boolean implications and pins
(i.e., assertions of the form $v=0$ or $v=1$), then
$\numCSP(\Gamma)$ is as hard to approximate as the problem \numBIS{}
of counting independent sets in a bipartite graph; otherwise,
$\numCSP(\Gamma)$ is as hard to approximate as the problem \numSAT{}
of counting the satisfying truth assignments of a Boolean
formula.  Dyer, Goldberg, Greenhill and Jerrum have shown that the
latter problem is complete for \numP{} under appropriate
approximation-preserving reductions (see
Section~\ref{sec:Prelim:Approx}) and has no FPRAS unless $\NPtime =
\RPtime$ \cite{DGGJ2004:Approx}, which is thought to be unlikely.  The
complexity of \numBIS{} is currently open: there is no known FPRAS but
it is not known to be \numP{}-complete, either. \numBIS{} is known to
be complete with respect to approximation-preserving reductions in a
logically-defined subclass of \numP{} \cite{DGGJ2004:Approx}.


\subsection{Our result}

In this paper we consider the complexity of approximately solving
Boolean $\numCSP$ problems when instances have bounded
degree. Following Dalmau and Ford~\cite{DF2003:bdeg-gensat} and
Feder~\cite{Fed2001:Fanout} we consider the case in which
$\Rzero=\{0\}$ and $\Rone=\{1\}$ are available.  We show
that any Boolean relation that is not definable as a conjunction of
ORs or NANDs can be used in low-degree instances to assert equalities
between variables.  Thus, we can side-step degree restrictions by
replacing high-degree variables with distinct variables that are
constrained to be equal, reducing to Dyer, Goldberg and Jerrum's
trichotomy for Boolean
$\numCSP$ without degree restrictions~\cite{DGJ2010:Bool-approx}.

Our main result, Theorem~\ref{thmcor:main}, is a trichotomy for the
case in which instances have maximum degree~$d$ for any $d\geq
6$. If every relation in~$\Gamma$ is affine then $\numCSP_d(\Gamma
\cup \{\Rzero,\Rone\})$ is solvable in polynomial time. Otherwise, if
every relation in $\Gamma$ can be defined as a conjunction
of $\Rzero$, $\Rone$ and binary implications, then $\numCSP_d(\Gamma
\cup \{\Rzero,\Rone\})$ is equivalent in approximation complexity to
$\numBIS{}$. Otherwise, it has no FPRAS unless
$\NPtime=\RPtime$. Theorem~\ref{theorem:complexity} gives a partial
classification of the complexity when $d<6$. In the new cases that
arise here, the complexity is given in terms of $\numwHISd$, the
complexity of counting independent sets in hypergraphs of degree at
most~$d$ with hyper-edges of size at most $w$. The complexity of this
problem is not fully understood. We explain what is known about it in
Section~\ref{sec:Complexity}.


\subsection{Organization}

The remainder of the paper is organized as follows.  In
Section~\ref{sec:Prelim}, we define the basic notation, relational
operations and hypergraph properties that we use, and formally define
bounded-degree CSPs.  In Section~\ref{sec:Relations}, we introduce the
classes of relations that we will use throughout the paper and give
some of their basic properties.  A key tool in this type of
work~\cite{CLXxxxx:Complex-numCSP, Fed2001:Fanout} is characterizing
the ability of certain relations or sets of relations to assert
equalities between variables: we show when this can be done in
Section~\ref{sec:SimEq}.  The last piece of preparatory work is to
show that every Boolean relation that cannot simulate equality in this
way is definable by a conjunction of pins and either ORs or
NANDs, which is done in Section~\ref{sec:Trichotomy}.  Our
classification of the approximation complexity of bounded-degree
Boolean counting CSPs follows, in Section~\ref{sec:Complexity}.


\section{Preliminaries}
\label{sec:Prelim}


\subsection{Basic notation}
\label{sec:basicnotation}

We write $\abar$ for the tuple $\tuple{a_1, \dots, a_r}$, which we
often shorten to $a_1\dots a_r$.  We write $a^r$ for the
$r$-tuple $a\dots a$ and $\abar\bbar$ for the tuple formed from
the elements of $\abar$ followed by those of $\bbar$.

The \emph{bit-wise complement} of a relation $R\subseteq \Bool^r$
is the relation
\begin{equation*}
    \bwcomp{R} = \{\tuple{a_1\oplus 1, \dots, a_r\oplus 1}
                                                   \mid \abar\in R\}\,,
\end{equation*}
where $\oplus$ denotes addition modulo~2.

We say that a relation $R$ is \emph{ppp-definable}\footnote{This
should not be confused with the concept of primitive positive
definability (pp-definability) which appears in algebraic
treatments of CSP and \numCSP{}, for example in the work of
Bulatov \cite{Bul2008:Dichotomy}.} in a relation $R'$ and write
$R\pppleq R'$ if $R$ can be obtained from $R'$ by some sequence of the
following operations:
\begin{itemize}
\itemspacing
\item permutation of columns;
\item pinning (taking sub-relations of the form $R_{i\mapsto c} =
    \{\abar\in R \mid a_i = c\}$ for some $i$ and some $c\in\Bool$); and
\item projection (``deleting the $i$th column'' to give
    $\{a_1\dots a_{i-1} a_{i+1}\dots a_r \mid a_1\dots a_r\in R\}$).
\end{itemize}
The three p's in ``ppp-definable'' refer to the initial letters of the
words permutation, pinning and projection. Allowing permutation of
columns is just a notational convenience: it clearly adds no
expressive power.

It is easy to see that $\pppleq$ is a partial order on Boolean
relations and that, if $R\pppleq R'\!$, then $R$ can be obtained from
$R'$ by first permuting the columns, then making some pins and then
projecting.

We write $\Rzero = \{0\}$, $\Rone = \{1\}$, $\Req = \{00, 11\}$, $\Rneq = \{01, 10\}$, $\Ror = \{01, 10,
11\}$, $\Rnand = \{00, 01, 10\}$, $\Rimp = \{00, 01, 11\}$ and $\Rpmi=\{00, 10, 11\}$.  For
$k\geq 2$, we write $\Reqk = \{0^k\!, 1^k\}$, $\Rork = \Bool^k
\setminus \{0^k\}$ and $\Rnandk = \Bool^k \setminus \{1^k\}$ (i.e.,
$k$-ary equality, \OR{} and \NAND{}, respectively).

We write $\proj{i}R$ for the projection of $R$ onto its $i$th column
and $\proj{i,j}R$ for the projection onto columns $i$ and~$j$.


\subsection{Boolean constraint satisfaction problems}

A \emph{constraint language} is a set $\Gamma = \{R_1, \dots, R_m\}$
of named Boolean relations.  Given a set $V$ of variables, a
\emph{constraint} over $\Gamma$ is an expression $R(\vbar)$ where
$R\in\Gamma$ has arity $r$ and $\vbar\in V^r\!$.  Note that, if $v$ and
$v'$ are variables, neither $v=v'$ nor $v\neq v'$ is a constraint,
though of course $\Req(v,v')$ is a constraint if $\Req \in \Gamma$ and
similarly for $\Rneq$. The \emph{scope} of a constraint $R(\vbar)$ is
the tuple $\vbar$.  Note that the variables in the scope of a
constraint need not all be distinct.

An \emph{instance} of the constraint satisfaction problem (CSP) over
$\Gamma$ is a set $V$ of variables and a set $C$ of constraints over
$\Gamma$ in the variables in $V$.

An \emph{assignment} to a set $V$ of variables is a function
$\sigma\colon V\to \Bool$ and it \emph{satisfies} an
instance $(V, C)$ if $\tuple{\sigma(v_1), \dots, \sigma(v_r)}\in R$
for every constraint of the form $R(v_1, \dots, v_r)$. Given an
instance $I$ of some CSP, we write $Z(I)$ for the number of satisfying
assignments.

We are interested in the counting CSP problem $\numCSP(\Gamma)$
(parameterized by $\Gamma$), defined as:
\begin{description}
\item[Input:] an instance $I=(V, C)$ of CSP over $\Gamma$.
\item[Output:] $Z(I)$.
\end{description}

The \emph{degree} of an instance is the greatest number of times any
variable appears among its constraints.  Note that the variable $v$
appears twice in the constraint $R(v,v)$.  Our specific interest in
this paper is in classifying the complexity of bounded-degree counting CSPs.
For a constraint language $\Gamma$ and a positive integer $d$, define
$\numCSPd(\Gamma)$ to be the restriction of $\numCSP(\Gamma)$ to
instances of degree at most $d$.
We can deal with instances of degree~1 immediately.

\begin{theorem}
\label{thrm:degree-1}
    For any $\Gamma\!$, $\numCSPd[1](\Gamma)\in \FPtime$.
\end{theorem}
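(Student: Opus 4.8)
The plan is to exploit the very rigid structure of degree-$1$ instances. First I would observe that if $I = (V,C)$ has degree at most $1$, then every variable of $V$ occurs at most once in total among the scopes of the constraints. Two consequences follow at once: in each constraint $R(v_1,\dots,v_r) \in C$ the arguments $v_1,\dots,v_r$ are pairwise distinct (a repeated variable would occur at least twice), and no variable lies in the scopes of two different constraints. Hence the scopes of the constraints of $C$ are pairwise disjoint subsets of $V$, each listing distinct variables, and the instance decomposes into mutually independent pieces: one piece per constraint, together with the ``free'' variables appearing in no constraint.

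Next I would count satisfying assignments piecewise. For a single constraint $R(v_1,\dots,v_r)$ with distinct arguments, the assignments to $\{v_1,\dots,v_r\}$ that satisfy it are exactly the tuples of $R$, so there are $|R|$ of them; each free variable may independently take either value, and assignments on distinct pieces combine freely. Writing $s$ for the total number of variable occurrences in $C$ (equivalently, the sum of the arities of the constraints), this gives the closed form
\[
    Z(I) \;=\; 2^{\,|V|-s} \prod_{R(\vbar)\in C} |R|\,.
\]
I would note in passing that this formula also handles the degenerate situations correctly: if $\Gamma$ contains the empty relation (or a nullary relation equal to $\emptyset$) and such a constraint is present, the product contains a factor $0$ and $Z(I)=0$, as it should be.

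Finally I would check that the right-hand side is computable in time polynomial in $|I|$. Each factor $|R|$ is a constant determined by the fixed language $\Gamma$; the scopes being disjoint and non-empty, there are at most $|V|$ constraints of positive arity (and only boundedly many nullary ones); and since $|R| \leq 2^{r}$ for an $r$-ary relation, $Z(I) \leq 2^{|V|}$, so the answer has $O(|V|)$ bits and the product together with the final multiplication by $2^{|V|-s}$ can be carried out with polynomially many bit operations. Hence $Z(I)$ can be computed and output in polynomial time, giving $\numCSPd[1](\Gamma) \in \FPtime$. There is no real obstacle here — the statement is a base case — beyond making the decomposition observation precise and being slightly careful about the bookkeeping of variable occurrences and about empty or nullary relations.
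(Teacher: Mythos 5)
Your proposal is correct and follows essentially the same argument as the paper: degree at most $1$ forces the constraint scopes to be disjoint with distinct arguments, so $Z(I)$ factors as $\prod |R|$ times $2^k$ for the unconstrained variables. The extra remarks on empty relations and bit-complexity are fine but not needed beyond what the paper's one-line proof already establishes.
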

\begin{proof}
    Because each variable appears at most once, the constraints are
    independent.  Each constraint $R(v_1, \dots, v_r)$ can be
    satisfied in $|R|$ ways and any variable that does not appear in a
    constraint can take either the value 0 or~1.  The total number of
    assignments is the product of the number of ways each constraint
    can be satisfied, multiplied by $2^k\!$, where $k$ is the number of
    unconstrained variables.
\end{proof}

A key technique in proving hardness results for \numCSP{} and related
problems is \emph{pinning} \cite{CH1996:Bool-numCSP, DG2000:num-Hcol,
  Fed2001:Fanout, DF2003:bdeg-gensat, DGJ2009:WBool,
  DGJ2010:Bool-approx}.  We write $\Rzero=\{0\}$ and $\Rone=\{1\}$ for
the two unary relations that contain only zero and one,
respectively.  We refer to constraints in $\Rzero$ and $\Rone$ as
\emph{pins} and we say that the single variable in the scope of a pin
is \emph{pinned}.  To make notation easier, we will sometimes write
constraints using constants instead of explicit pins.  That is, we
will write constraints of the form $R(x_1, \dots, x_r)$ where each
$x_i$ is either a variable from $V$ or a constant 0 or~1 (again, the
$x_i$ need not be distinct).  Such a constraint can always be
rewritten as a set of ``proper'' constraints by replacing each instance
of a constant 0 or~1 with a fresh variable $v$ and introducing the
appropriate constraint $\Rzero(v)$ or~$\Rone(v)$.  Note that every
variable introduced in this way appears exactly twice in the resulting
instance so if the degree of the CSP instance is at least two, the
transformation does not increase the instance's degree. We let
$\GammaPin$ denote the constraint language $\{\Rzero, \Rone\}$.

When there are no degree bounds, adding pinning does not affect
complexity results for either the exact or approximate version of
\numCSP{}.  In the exact case, the addition of pinning does not affect
the structural properties that determine the complexity of
$\numCSP(\Gamma)$ \cite{DR2011:dichotomy} whereas, for approximation
on the Boolean domain, there are reductions of the appropriate kind
from $\numCSP(\Gamma \cup \GammaPin)$ to $\numCSP(\Gamma)$
\cite{DGJ2010:Bool-approx, DGJ2009:WBool}.  However, these reductions
increase the degree of variables so are not applicable in our setting.
In order to make progress,
we follow earlier work on degree-bounded CSP \cite{Fed2001:Fanout,
DF2003:bdeg-gensat} and assume that pinning is available in
constraint languages.  This plays a significant role in
Section~\ref{sec:SimEq}.


\subsection{Hypergraphs}

A \emph{hypergraph} $H=(V,E)$ consists of a set $V=V(H)$ of vertices
and a set $E = E(H)\subset \powerset{V}$ of non-empty
\emph{hyper-edges}. The \emph{degree} of a vertex $v\in V(H)$ is the
number $d(v)$ of hyper-edges it participates in: $d(v) = |\{e\in
E(H)\mid v\in e\}|$. The degree of a hypergraph is the maximum degree
of its vertices. If $w = \max \{|e| \mid e\in E(H)\}$, we say that $H$
has \emph{width} $w$.

An \emph{independent set} in a hypergraph $H$ is a set $S\subseteq
V(H)$ such that $e\nsubseteq S$ for every $e\in E(H)$.  Notice
that we may have more than one vertex of a hyper-edge in an
independent set, so long as at least one vertex of each hyper-edge is
omitted.

We write \numwHIS{} for the following problem:
\begin{description}
    \setlength{\itemsep}{0pt}
    \item[Input:]  a width-$w$ hypergraph $H$
    \item[Output:] the number of independent sets in $H$
\end{description}
and \numwHISd{} for the following problem:
\begin{description}
    \setlength{\itemsep}{0pt}
    \item[Input:]  a width-$w$ hypergraph $H$ of degree at most $d$
    \item[Output:] the number of independent sets in $H$.
\end{description}


\subsection{Approximation complexity}
\label{sec:Prelim:Approx}

A \emph{randomized approximation scheme} (RAS) for a function $f\colon\Sigma^*\to\mathbb{N}$ is a probabilistic Turing machine that takes as input a pair $(x,\epsilon)\in \Sigma^*\times (0,1)$, and produces, on an output tape, an integer random variable~$Y$ satisfying the condition $\Pr(e^{-\epsilon} \leq Y/f(x) \leq e^\epsilon)\geq \frac{3}{4}$.%
       \footnote{The choice of the value $\frac{3}{4}$ is inconsequential: the same class of problems has an FPRAS if we choose any probability $\frac{1}{2}<p<1$ \cite{JVV1986:Randgen}.}
A \emph{fully polynomial randomized approximation scheme (FPRAS)} is a
RAS that runs in time polynomial in both $|x|$ and $\epsilon^{-1}\!$.

To compare the complexity of approximate counting problems, we use the
AP-reductions of \cite{DGGJ2004:Approx}. Suppose that $f$ and $g$ are
functions from some input domain $\Sigma^*$ to the natural numbers and
we wish to compare the complexity of approximately computing them.  An
\emph{approximation-preserving} reduction from~$f$ to~$g$ is a
probabilistic oracle Turing machine $M$ whose input is a pair
$(x,\epsilon)\in \Sigma^*\times (0,1)$, and which satisfies the following
three conditions: (i) every oracle call made by $M$ is of the form
$(w,\delta)$ where $w\in \Sigma^*$ is an instance of~$g$ and
$0<\delta<1$ is an error bound satisfying $\delta^{-1} \leq
\mathrm{poly}(|x|,\epsilon^{-1})$; (ii) $M$ is a randomized
approximation scheme for $f$ whenever the oracle is a randomized
approximation scheme for $g$; and (iii) the running time of $M$ is
polynomial in $|x|$ and $\epsilon^{-1}$.
 
If there is an approximation-preserving reduction from $f$ to $g$, we write $f\APred g$ and say that $f$ is \emph{AP-reducible} to $g$. If $g$ has an FPRAS then so does $f$. If $f\APred g$ and $g\APred f$ then we say that $f$ and $g$ are \emph{AP-interreducible} and write $f\APequiv g$.

AP-reductions are well-suited to approximate counting problems.  The
class of problems admitting an FPRAS is closed under these reductions
and a Ladner-like hierarchy of AP-interreducible approximation
problems has been shown to exist~\cite{Bor2010:Approx-hierarchy}.
Further, the intuition that the counting version of an
\NPtime{}-complete problem should be \numP{}-complete is a theorem if
\numP{}-completeness is defined with respect to
AP-reductions~\cite{DGGJ2004:Approx} but is not known to hold for
other candidate classes of reduction, such as Simon's parsimonious
reductions~\cite{Sim1977:One-vs-many} and polynomial-time Turing
reductions.


\section{Classes of relations}
\label{sec:Relations}

A relation $R\subseteq \Bool^r$ is \emph{affine} if it is the set of solutions to some system of linear equations over \GFtwo{}.  That is, there is a set $\Sigma$ of equations in variables $x_1, \dots, x_r$ where each equation has the form $\bigoplus_{i\in I}x_i = c$, where $\oplus$ denotes addition modulo~2, $I\subseteq [1,r]$ and $c\in \Bool$, and we have $\abar\in R$ if, and only if, the assignment $x_1\mapsto a_1, \dots, x_r\mapsto a_r$ satisfies every equation in $\Sigma$.  Note that the empty relation is defined by the equation $0=1$ (or, more formally, $\bigoplus_{i \in \emptyset} = 1$) and the complete relation $\Bool^r$ is defined by the empty set of equations. If a variable $x_i$ occurs in an equation of the form $x_i = c$, we say that it is \emph{pinned to $c$}.


\subsection{\ORconj{}, \NANDconj{}, \IMconj{} and normalized formulae}

Let \ORconj{} be the set of Boolean relations that are defined by
conjunctions of pins and \OR{}s of any arity and let \NANDconj{} be the
set of Boolean relations definable by conjunctions of pins and
\NAND{}s (i.e., negated conjunctions) of any arity. For example, the
8-ary relation defined by the formula
    \begin{equation*}
        (x_1 = 0) \ \wedge \ (x_2 = 1) \ \wedge \ \OR(x_3,x_4,x_5,x_6) \
            \wedge \ \OR(x_5,x_8)
    \end{equation*}
is in \ORconj{}.  (Note, also, that it does not constrain the variable $x_7$.) We say that one of the defining formulae of these relations is \emph{normalized} if
\begin{itemize}
\itemspacing
\item no pinned variable appears in any \OR{} or \NAND{},
\item the arguments of each individual \OR{} and \NAND{} are distinct,
\item every \OR{} or \NAND{} has at least two arguments and
\item no \OR{} or \NAND{}'s arguments are a subset of any
    other's.
\end{itemize}
Note that the formula in the example above is normalized.

\begin{lemma}
\label{lemma:conj-norm}
    Every \ORconj{} (respectively, \NANDconj{}) relation is defined by
    a unique normalized formula.
\end{lemma}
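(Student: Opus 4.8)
The plan is to treat the OR-conjunction and NAND-conjunction cases separately but symmetrically (the NAND case follows from the OR case by bit-wise complementation, since $\widetilde{\OR_k} = \NAND_k$ and complementation commutes with pinning and conjunction). First I would establish \emph{existence}: given any defining formula for an $\ORconj$ relation $R$, I would show it can be massaged into a normalized one. Duplicate arguments inside a single OR are removed (an OR is unchanged by repeating a literal); an OR containing a variable pinned to $1$ is satisfied automatically and can be deleted; an OR containing a variable pinned to $0$ has that variable removed; a one-argument OR $\OR(x_i)$ forces $x_i=1$, so it becomes a pin; and if one OR's argument set is a subset of another's, the larger (superset) OR is implied by the smaller and can be deleted. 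Applying these rules until none applies terminates (each strictly reduces some measure, e.g. total number of literals plus number of ORs) and yields a normalized formula defining the same relation. The degenerate cases need a word: if some variable is pinned both to $0$ and to $1$, or a zero-argument OR $\OR()$ appears, then $R=\emptyset$, which we regard as defined by the unique normalized formula consisting of such a contradiction — I would fix a convention for this.

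The heart of the lemma is \emph{uniqueness}. Suppose $\Phi$ and $\Psi$ are two normalized formulae defining the same relation $R\subseteq\Bool^r$. I would first recover the pins from $R$ alone: a variable $x_i$ is pinned to $c$ in \emph{any} defining formula if and only if $a_i=c$ for every $\abar\in R$. Since normalized formulae put no pinned variable in any OR, the pinned variables and their values are identical in $\Phi$ and $\Psi$, and after deleting them we may assume $R$ has no pinned coordinate and both formulae are pure conjunctions of ORs (each with $\geq 2$ distinct arguments, no containments). Now I claim the set of OR-scopes is determined by $R$: a set $S\subseteq\{x_1,\dots,x_r\}$ with $|S|\geq 2$ is the scope of an OR in a normalized formula if and only if $S$ is a \emph{minimal} set such that the assignment setting every variable in $S$ to $0$ and everything else to $1$ is not in $R$. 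One direction is immediate — if $\OR(S)$ is a conjunct, setting $S$ to all-zero violates it. For the converse, if setting $S$ to zero (rest to one) falsifies $\Phi$, some conjunct $\OR(T)$ is falsified, which forces $T\subseteq S$; minimality of $S$ then gives $T=S$, so $\OR(S)$ is a conjunct of $\Phi$. Hence the conjunct-scopes of $\Phi$ equal this intrinsically-defined family, and likewise for $\Psi$, so $\Phi=\Psi$.

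The main obstacle I anticipate is getting the ``witness assignment'' characterization of OR-scopes exactly right, including the bookkeeping around degenerate and empty relations and the subset-freeness condition — one has to verify that minimality of $S$ interacts correctly with the no-containment normalization (it is precisely the no-containment condition that makes the falsifying conjunct unique and equal to $\OR(S)$). Everything else is routine case analysis. For the $\NANDconj$ case I would simply remark that $R\mapsto\widetilde R$ is an involution on Boolean relations that sends $\ORconj$ relations to $\NANDconj$ relations, carries normalized formulae to normalized formulae (swapping OR for NAND and the pin values), and is a bijection on defining formulae, so uniqueness transfers directly from the OR case.
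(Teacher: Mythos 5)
Your proposal is correct and follows essentially the same route as the paper: existence via the same rewriting rules, and uniqueness via witness assignments that set a candidate clause's variables to $0$ and everything else to $1$, with the subset-freeness condition doing the decisive work. Your packaging of uniqueness as an intrinsic characterization of the clause-scopes (minimal sets whose zero/one witness falsifies $R$) is a clean reformulation of the paper's clause-by-clause comparison, and your explicit handling of the empty-relation degenerate case is a small point the paper leaves implicit.
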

\begin{proof}
    We show the result for \ORconj{} relations; the case for
    \NANDconj{} is similar.

    Let $R$ be an \ORconj{} relation defined by the formula $\phi$.
    The second and subsequent occurrences of any variable within a
    single clause can be deleted.
    Any clause that contains a variable pinned to one can be deleted;
    any variable that is pinned to zero can be deleted from any clause
    in which it appears.  The disjunction $\OR(x)$ is equivalent to
    pinning $x$ to one.  If $\phi$ contains a clause that is a subset
    of another, any assignment that satisfies the smaller clause
    necessarily satisfies the latter, which can, therefore, be
    deleted.  This establishes that every \ORconj{} relation is
    defined by at least one normalized formula.

    To prove uniqueness, suppose that the relation $R\subseteq \Bool^r$ is defined
    by the normalized formulae $\phi$ and $\psi$.  The two formulae must
    obviously pin the same variables and we may assume that none are
    pinned.  Consider any clause in $\phi$, which we may assume,
    without loss of generality, to be $\OR(x_1, \dots, x_k)$.  Since
    no clause of $\phi$ is a subset of $\{x_1, \dots, x_k\}$, every
    other clause must include at least one variable from $x_{k+1},
    \dots, x_r$ and, therefore, $0^{k-1}1^{r-k+1}$ satisfies $\phi$
    and $0^k1^{r-k}$ does not.

    Now, suppose that this clause does not appear in $\psi$.  There
    are two cases.  If $\psi$ contains a clause whose variables are a
    subset of $\{x_1, \dots, x_k\}$, which we may assume, without loss
    of generality, to be $\OR(x_1, \dots, x_\ell)$ for some $\ell<k$,
    then $\psi$ is not satisfied by $0^{k-1}1^{r-k+1}\!$.  Otherwise,
    every clause of $\psi$ contains at least one variable from
    $x_{k+1}, \dots, x_r$, so $0^k1^{r-k}$ satisfies $\psi$.
    In either case, $\phi$ and $\psi$ define different relations.  It
    follows that every clause that appears in $\phi$ must also appear
    in $\psi$.  By symmetry, every clause that appears in $\psi$ must
    appear in $\phi$ so the two formulae are identical.
\end{proof}

Given the uniqueness of defining normalized formulae, we define the
\emph{width} of an \ORconj{} or \NANDconj{} relation $R$ to be
$\width{R}$, the greatest number of arguments to any of the \OR{}s or
\NAND{}s in the normalized formula that defines it.  Note that, from
the definition of normalized formulae, there are no relations of
width~1.  However, a conjunction of pins can be seen as an \ORconj{}
formula with no \OR{}s, i.e., of width~0: such a formula defines the
complete relation, possibly padded with some constant columns.  A
conjunction of pins is also a \NANDconj{} formula with no \NAND{}s so
we will usually just refer to these relations as ``relations of
width~0.''  We define the width of an \ORconj{} or \NANDconj{}
constraint language to be the greatest width of the relations within
it.

We define \IMconj{} to be the class of relations defined by
conjunctions of pins and (binary) implications --- this class is called
$\text{IM}_2$ in \cite{DGJ2010:Bool-approx}.  We say that a conjunction
of pins and implications is \emph{normalized} if no pinned variable
appears in an implication and the arguments of every implication are distinct.

\begin{lemma}
\label{lemma:IMconj-norm}
    Every relation in \IMconj{} is defined by a normalized formula.
\end{lemma}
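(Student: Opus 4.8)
The plan is to mimic the first part of the proof of Lemma~\ref{lemma:conj-norm}: starting from an arbitrary conjunction of pins and implications defining $R$, I would repeatedly apply local simplifications until the formula is normalized, checking that each simplification preserves the relation and that the process terminates. Since the statement only asserts \emph{existence} of a normalized defining formula (not uniqueness), this is all that is needed.

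Concretely, let $\phi$ be a conjunction of pins and implications defining $R\subseteq\Bool^r$. First, any conjunct of the form $x\imp x$ is a tautology and may be deleted. Next, suppose some variable $x$ occurs in an implication and is also pinned, say pinned to~$0$ by a conjunct $x=0$ of~$\phi$. Then any conjunct $x\imp y$ is satisfied whenever $x\mapsto 0$, so it is implied by the pin and may be deleted; and any conjunct $y\imp x$ forces $y\mapsto 0$ in the presence of the pin, so it may be replaced by the pin $y=0$. If instead $x$ is pinned to~$1$, the symmetric argument applies with the two sides of each implication interchanged: $y\imp x$ is deleted and $x\imp y$ is replaced by the pin $y=1$. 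I would repeat these steps as long as some pinned variable still occurs in an implication or some implication has equal arguments.

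This process terminates: deleting a tautological implication decreases the number of implications by one, replacing an implication by a pin decreases the number of implications by one, and no step ever creates a new implication or removes a pin; hence the number of implications strictly decreases at each step. When no step applies, no pinned variable appears in any implication and every implication has distinct arguments, which is exactly the definition of a normalized formula. Each step replaces $\phi$ by a logically equivalent formula, so the final formula still defines $R$, which proves the lemma.

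I do not expect a real obstacle here; the only points requiring care are the termination argument and the observation that the halting condition coincides with the definition of ``normalized''. It is worth noting that, unlike Lemma~\ref{lemma:conj-norm}, uniqueness is \emph{not} claimed and in fact fails: the normalization conditions do not rule out transitively redundant implications, so, for example, $x\imp y \wedge y\imp z$ and $x\imp y \wedge y\imp z \wedge x\imp z$ are both normalized and define the same relation.
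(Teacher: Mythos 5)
Your proposal is correct and follows essentially the same route as the paper's proof: delete tautological implications $x\imp x$, delete implications made vacuous by a pin, and replace implications forced by a pin with a new pin, iterating until done. Your explicit termination argument and the closing remark about non-uniqueness (which the paper also notes, with examples, immediately after the lemma) are welcome but do not change the substance.
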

\begin{proof}
    Let $R\in\IMconj{}$ be defined by the formula $\phi$.  Any
    implication $x\imp x$ can be deleted as it does not constrain
    the value of $x$.  If the variable $y$ is pinned to zero then any
    implication $y\imp z$ can be deleted and any implication
    $z\imp y$ can be replaced by pinning $z$ to zero.  If $y$ is
    pinned to one, $y\imp z$ can be replaced by pinning $z$ to one
    and $z\imp y$ can be deleted.  Iterating, we can remove all
    implications involving pinned variables.
\end{proof}

Note that, in contrast to normalized \ORconj{} and \NANDconj{}
formulae, normalized \IMconj{} formulae are not necessarily unique.
For example, the following three normalized formulae all define the
same relation:
\begin{gather*}
    x\imp y \ \wedge \ y\imp z \ \wedge \ z\imp x                     \\
    x\imp z \ \wedge \ z\imp y \ \wedge \ y\imp x                     \\
    x\imp y \ \wedge \ y\imp x \ \wedge \ x\imp z \ \wedge \ z\imp x\,.
\end{gather*}


\subsection{ppp-defining Boolean connectives}

\begin{lemma}
\label{lemma:IMconj-implies}
    If $R\in\IMconj$ is not affine, then $\Rimp\pppleq R$.
\end{lemma}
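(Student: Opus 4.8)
The plan is to take a normalized \IMconj{} formula $\phi$ defining $R$ (which exists by Lemma~\ref{lemma:IMconj-norm}) and to exploit the hypothesis that $R$ is not affine in order to locate a ``genuine'' implication inside $\phi$ that we can isolate by pinning and projecting. First I would dispose of the pinned variables: if a variable $x_i$ is pinned in $\phi$, then applying the pinning operation $R_{i \mapsto c}$ with the correct constant $c$ leaves $R$ unchanged, and we may then project that column away; doing this for all pinned variables reduces to the case where $\phi$ consists purely of implications on the surviving variables and $R$ is still not affine (affineness is preserved downward under $\pppleq$ by Lemma~\ref{lemma:affine-props}, but here I want the contrapositive direction, which is fine since pinning to the ``correct'' constant and projecting does not change the relation at all). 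So assume $\phi = \bigwedge_{(i,j) \in S} (x_i \imp x_j)$ for some set $S$ of ordered pairs.

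Next I would analyze the structure of such a formula via the digraph $D$ on vertex set $[1,r]$ with an arc $i \to j$ for each $(i,j) \in S$. The key observation is that the relation defined by $\phi$ depends only on the reachability preorder of $D$: $\abar \in R$ iff $a_i \le a_j$ whenever $j$ is reachable from $i$. Contracting each strongly connected component, the variables within one component are forced equal (an affine condition, $x_i = x_j$), and between components we get implications along a partial order. If the partial order is trivial (an antichain after contraction), then $R$ is a conjunction of equalities, hence affine --- contradicting the hypothesis. Therefore there exist two components $C$ and $C'$ with $C' $ strictly above $C$ in the order and no component strictly between them (take a covering pair). Pick representatives $u \in C$ and $v \in C'$. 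I then pin every variable not in $C \cup C'$: variables in components $\le C$ get pinned to $0$, variables in components $\ge C'$ get pinned to $1$, and the remaining (incomparable) components can be pinned either way, say to $0$. These pins are consistent with at least one tuple of $R$, so the resulting pinned relation is nonempty, and on the columns of $C \cup C'$ it becomes exactly $\{\, \abar : a_i = a_j \text{ within } C,\ a_i = a_j \text{ within } C',\ a_u \le a_v \,\}$. Projecting away all but one column of $C$ and all but one column of $C'$ leaves the binary relation on $(u,v)$ equal to $\{00, 01, 11\} = \Rimp$. Hence $\Rimp \pppleq R$.

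The main obstacle I anticipate is the bookkeeping in the pinning step: one has to be careful that the pattern of pins chosen is simultaneously consistent with \emph{some} satisfying assignment of $\phi$ (so the pinned relation is nonempty and really reduces to $\Rimp$ rather than to $\Req$ or to the empty relation), and that after the pins the only surviving constraint between the two chosen representatives is the single implication $u \imp v$ and not also the reverse implication. Choosing a \emph{covering} pair of components rather than an arbitrary comparable pair is what guarantees there is no intermediate component forcing extra equalities, and pinning the ``downset'' to $0$ and the ``upset'' to $1$ is exactly what kills every implication except $u \imp v$ while staying consistent. Everything else is routine manipulation of the reachability preorder.
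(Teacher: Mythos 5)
Your overall strategy is the same as the paper's: normalize the formula via Lemma~\ref{lemma:IMconj-norm}, observe that cycles of implications only assert equalities (so a non-affine $R$ must contain an asymmetric implication), and then isolate that implication by pinning all the other variables and projecting. The passage to the reachability preorder, the contraction of strongly connected components, and the choice of a \emph{covering} pair $C<C'$ are all fine. The flaw is in the pinning rule itself. You pin to $0$ every component that is neither below $C$ nor above $C'$, but such a component may lie strictly \emph{above} $C$ without lying above $C'$, and pinning it to $0$ forces $a_u=0$. Concretely, let $R$ be defined by $(x_1\imp x_2)\wedge(x_1\imp x_3)$ and take $C=\{x_1\}$, $C'=\{x_2\}$ (a covering pair). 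The component $\{x_3\}$ is incomparable to $C'$ and not below $C$, so your rule pins $x_3$ to $0$; the constraint $x_1\imp x_3$ then forces $x_1=0$, and the relation you obtain on $(x_1,x_2)$ is $\{00,01\}$, not $\Rimp$. So the construction as written fails, and this is exactly the bookkeeping step you flagged as the danger point.

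The repair is local: the components that must be pinned to $1$ are precisely those in the strict up-set of $C$ other than $C'$ itself, and \emph{everything} else (apart from $C$ and $C'$) is pinned to $0$. With that rule the covering condition does the work you wanted it to do: no component pinned to $1$ can lie strictly below $C'$ (it would be strictly between $C$ and $C'$), so $a_v$ is not forced up to $1$; no component pinned to $0$ lies strictly above $C$, so $a_u$ is not forced down to $0$; and one checks directly that the assignments realizing $00$, $01$ and $11$ on $(u,v)$ are all consistent with the pins while $10$ is excluded by $u\imp v$. Note that it is not enough to say "the remaining components can be pinned either way": neither uniform choice works in general (pinning a component strictly below $C'$ but incomparable to $C$ to $1$ would dually force $a_v=1$), so the pins on the sideways components must be chosen according to their position relative to $C$, as above.
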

\begin{proof}
    Let $R\in\IMconj{}$ be defined by the normalized formula $\phi$.
    If there are variables $x_1, \dots, x_r$ such that $\phi$ contains
    the implications $x_1\imp x_2$, \dots, $x_{r-1}\imp x_r$ and
    $x_r\imp x_1$ then, in any satisfying assignment for $\phi$, the
    variables $x_1, \dots, x_r$ must take the same value.  Hence, we may
    assume that, if $\phi$ contains such a cycle of
    implications, it also contains $x_i\imp x_j$ for every distinct
    pair $x_i, x_j\in\{x_1, \dots, x_r\}$.

    There are two cases.  First, if $\phi$ is symmetric (in the sense
    that, for every implication $x\imp y$ in $\phi$, the formula also
    contains $y\imp x$) then $\phi$ is equivalent to a conjunction of
    pins and equalities between variables, so $R$ is affine.
    Otherwise, there must be at least one pair of variables such that
    $x\imp y$ is a conjunct of $\phi$ but $y\imp x$ is not.  We
    ppp-define implication by pinning to zero every unpinned variable
    $v_1$ such that there is a chain of implications
    $v_1\imp v_2$, \dots, $v_{r-1}\imp v_r$, $v_r\imp x$ and pinning
    to one every other unpinned variable apart from $x$ and $y$.
    Finally, project out the $r-2$ constant columns.
\end{proof}

\begin{lemma}
\label{lemma:ORconj-OR}
    If $R\in\ORconj$ has width $w$, then $\Rork[2], \dots, \Rork[w] \pppleq R$. Similarly, if $R\in\NANDconj$ has width $w$, then $\Rnandk[2], \dots, \Rnandk[w] \pppleq R$.
\end{lemma}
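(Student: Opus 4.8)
The plan is to prove $\Rork[w] \pppleq R$ directly from the normalized formula for $R$, and then obtain all the lower-width cases $\Rork[k] \pppleq R$ for $2 \le k < w$ from a short syntactic argument about $\Rork[w]$ together with transitivity of $\pppleq$ (which is a partial order, as noted after its definition).

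For the first step, let $\phi$ be the unique normalized formula defining $R$ (Lemma~\ref{lemma:conj-norm}), with $R$ of arity $r$ and variables $x_1,\dots,x_r$. Since $\width{R}=w$, some \OR{}-clause of $\phi$ has exactly $w$ arguments; permuting columns, we may take it to be $\OR(x_1,\dots,x_w)$, and by normalization none of $x_1,\dots,x_w$ is pinned in $\phi$. Now, for each $j\in\{w+1,\dots,r\}$: if $x_j$ is pinned in $\phi$, say to $c_j$, pin column $j$ to $c_j$ (a no-op, since every tuple of $R$ already has $a_j=c_j$); otherwise pin column $j$ to $1$. I claim the latter pins trivialize every clause of $\phi$ other than $\OR(x_1,\dots,x_w)$. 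Indeed, if $C$ is another clause, its argument set is not contained in $\{x_1,\dots,x_w\}$: a proper subset would violate the ``no clause is a subset of another'' condition of normalization, while a clause on exactly $\{x_1,\dots,x_w\}$ would literally be $\OR(x_1,\dots,x_w)$. Hence $C$ contains some $x_j$ with $j>w$, and since pinned variables do not occur in clauses of a normalized formula, $x_j$ was unpinned in $\phi$, so we pinned it to $1$ and $C$ is now satisfied. After all these pins, the only surviving constraint among $x_1,\dots,x_w$ is $\OR(x_1,\dots,x_w)$, and columns $w+1,\dots,r$ are all constants; projecting those columns out yields exactly $\Bool^w\setminus\{0^w\}=\Rork[w]$ (the relation is non-empty — e.g.\ it contains the tuple in which every $\phi$-unpinned variable is $1$). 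Thus $\Rork[w]\pppleq R$.

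For the second step, it suffices to show $\Rork[k-1]\pppleq\Rork[k]$ for $k\ge 3$ and iterate down to $k$, using transitivity to conclude $\Rork[k]\pppleq R$ for all $2\le k\le w$. Starting from $\Rork[k]=\Bool^k\setminus\{0^k\}$, pin the last column to $0$: the sub-relation is $\{\abar 0 : \abar\in\Bool^{k-1}\setminus\{0^{k-1}\}\}$, and projecting out that column gives $\Bool^{k-1}\setminus\{0^{k-1}\}=\Rork[k-1]$.

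The \NANDconj{} case is completely symmetric: with $\phi$ now the normalized \NANDconj{} formula, pick a \NAND{}-clause $\NAND(x_1,\dots,x_w)$ of width $w$, pin each non-clause variable to its forced value or, if unpinned, to $0$ (this trivializes every other \NAND{}-clause, since a \NAND{} is satisfied as soon as one argument is $0$, and the same subset argument shows every other clause contains such a variable), and project to obtain $\Rnandk[w]=\Bool^w\setminus\{1^w\}$; then $\Rnandk[k-1]\pppleq\Rnandk[k]$ by pinning a column to $1$ and projecting it out, giving $\Rnandk[k]\pppleq R$ for $2\le k\le w$. The only delicate points throughout are the bookkeeping for variables already pinned in $\phi$ and the repeated use of normalization to ensure no ``foreign'' clause is contained in the chosen clause's argument set; I expect that verification — routine but requiring care — to be the main obstacle.
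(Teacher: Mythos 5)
Your proof is correct and follows essentially the same route as the paper: isolate a width-$w$ clause, trivialize all other clauses by pinning their witnesses outside $\{x_1,\dots,x_w\}$ to the satisfying value, and shrink the surviving \OR{} by pinning arguments to zero (the paper does this shrinking in one step to reach $\Rork$ directly, whereas you descend via $\Rork[k-1]\pppleq\Rork[k]$ and transitivity — a cosmetic difference). Your explicit handling of variables already pinned in $\phi$ (pinning them to their forced values rather than blindly to $1$) is in fact slightly more careful than the paper's wording.
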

\begin{proof}
    Let $R\in\ORconj{}$ have arity $r$ and width $w$.  Let $R$ be
    defined by the normalized formula $\phi$ which, without loss of
    generality, we may assume to contain the clause $\OR(x_1, \dots,
    x_w)$.  Since $\phi$ is normalized, every other clause must
    contain at least one variable from $x_{w+1}, \dots, x_r$.  For any
    $k$ with $2\leq k\leq w$, we can ppp-define $\Rork$ by pinning
    $x_{k+1}, \dots, x_w$ to zero and pinning $x_{w+1}, \dots, x_r$ to
    one. The proof for $R\in\NANDconj{}$ is similar.
\end{proof}


\subsection{Characterizations}

The following proposition establishes a duality between \ORconj{} and
\NANDconjLinebreak{} relations. Whenever we say that $R$ is \ORconj{}
or \NANDconj{}, it is equivalent to say that $R$ or $\bwcomp{R}$ is
\ORconj, where $\bwcomp{R}$ is the bit-wise complement of~$R$, as defined in Section~\ref{sec:basicnotation}.
Of course, it is also equivalent to say  that $R$ or $\bwcomp{R}$ is \NANDconj{}. 

\begin{proposition}
\label{prop:OR-NAND}
    A relation $R\subseteq \Bool^r$ is in \ORconj{} if, and only if,
    $\bwcomp{R}\in \NANDconj{}$.
\end{proposition}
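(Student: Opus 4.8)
The plan is to exploit the fact that bit-wise complementation is an involution, so that it suffices to translate a defining formula for $R$ into one for $\bwcomp{R}$ at the level of individual conjuncts. First I would note that $\bwcomp{\bwcomp{R}} = R$, so the two directions of the biconditional are symmetric: it is enough to show that if $R\in\ORconj{}$ then $\bwcomp{R}\in\NANDconj{}$, together with the mirror-image statement that if $S\in\NANDconj{}$ then $\bwcomp{S}\in\ORconj{}$ (applying the latter with $S=\bwcomp{R}$ recovers the converse direction). Both statements are proved by the same argument, so really only one transformation of formulae needs to be analysed.

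Concretely, suppose $R\subseteq\Bool^r$ is defined by a conjunction $\phi$ of pins and \OR{}s. Let $\bwcomp{\phi}$ be the formula obtained from $\phi$ by replacing each pin $x_i = c$ with $x_i = c\oplus 1$ and each clause $\OR(x_{i_1},\dots,x_{i_k})$ with $\NAND(x_{i_1},\dots,x_{i_k})$ on the same tuple of variables; this is a conjunction of pins and \NAND{}s. The key step is to check, conjunct by conjunct, that an assignment $\sigma$ satisfies $\phi$ if and only if the complemented assignment $\overline\sigma$, defined by $\overline\sigma(x) = \sigma(x)\oplus 1$, satisfies $\bwcomp{\phi}$. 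For a pin this is immediate, since $\sigma(x_i)=c$ iff $\overline\sigma(x_i) = c\oplus 1$. For an \OR{} clause, $\sigma$ satisfies $\OR(x_{i_1},\dots,x_{i_k})$ iff $\sigma(x_{i_j})=1$ for some $j$, iff $\overline\sigma(x_{i_j})=0$ for some $j$, iff $\overline\sigma$ satisfies $\NAND(x_{i_1},\dots,x_{i_k})$.

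It then follows that $\abar\in R$ iff $\abar$ satisfies $\phi$ iff $\tuple{a_1\oplus 1,\dots,a_r\oplus 1}$ satisfies $\bwcomp{\phi}$; since $\bwcomp{R}$ is by definition $\{\tuple{a_1\oplus 1,\dots,a_r\oplus 1}\mid\abar\in R\}$, this shows $\bwcomp{\phi}$ defines $\bwcomp{R}$, so $\bwcomp{R}\in\NANDconj{}$. The symmetric statement, turning \NAND{}s back into \OR{}s and complementing pins, is proved identically, and combining the two gives the biconditional. I do not expect any genuine obstacle here: the content is entirely that \OR{} and \NAND{} are De Morgan duals and that $\ORconj{}$ and $\NANDconj{}$ are defined by the mere existence of a defining formula (not a normalized one), so no bookkeeping about normalization is required.
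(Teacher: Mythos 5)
Your proof is correct and takes essentially the same approach as the paper: both exploit the De Morgan duality between \OR{} and \NAND{} under bit-wise complementation of a defining conjunction of pins and clauses. Your version is slightly more explicit about the semantic correspondence $\sigma \leftrightarrow \overline\sigma$ and about deducing the converse from the fact that complementation is an involution, but the underlying argument is the same.
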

\begin{proof}
    Suppose $R$ is defined by the normalized formula
    \begin{equation*}
        P \ \ \wedge \bigwedge_{1\leq j\leq m} \bigvee_{i\in I_j} x_i\,,
    \end{equation*}
    where $P$ is a conjunction of pins and $I_1, \dots, I_m\subseteq
    [1,r]$.  Then $\bwcomp{R}$ is defined by the formula
    \begin{equation*}
        P' \ \ \wedge \bigwedge_{1\leq j\leq m}
                                         \bigvee_{i\in I_j} \neg x_i\,,
    \end{equation*}
    where $P'$ is the conjunction of pins with the opposite values to
    those in $P$.  This formula is equivalent to
    \begin{equation*}
        P' \ \ \wedge \bigwedge_{1\leq j\leq m}
                                       \neg \bigwedge_{i\in I_j} x_i\,,
    \end{equation*}
     which is a \NANDconj{} formula, as required.  The argument is
     reversible.
\end{proof}

Given tuples $\abar, \bbar\in \Bool^r\!$, we write $\abar\leq \bbar$
if $a_i\leq b_i$ for all $i\in [1,r]$.  If $\abar\leq \bbar$ and
$\abar \neq \bbar$, we write $\abar < \bbar$.
We say that a relation $R\subseteq \Bool^r$ is \emph{monotone} if,
whenever $\abar\in R$ and $\abar\leq \bbar$, then $\bbar\in R$.  We
say that $R$ is \emph{antitone} if, whenever $\abar\in R$ and
$\bbar\leq \abar$, then $\bbar\in R$.  That is, changing zeroes to
ones in a tuple in a monotone relation gives another tuple in the
relation; similarly,
antitone relations are preserved by changing ones to zeroes.  It is
easy to see that $R$ is monotone if, and only if, $\bwcomp{R}$ is
antitone.  We say that a relation is \emph{pseudo-monotone}
(respectively, \emph{pseudo-antitone}) if its restriction to
non-constant columns is monotone (respectively, antitone).  The
following is a simple consequence of results in
\cite[Section~7.1.1]{KnuXXXX:TAOCPv4A}.

\begin{proposition}
\label{prop:OR-monotone}
    A relation $R\subseteq \Bool^r$ is in \ORconj{} (respectively,
    \NANDconjLinebreak) if, and only if, it is pseudo-monotone (respectively,
    pseudo-antitone).
\end{proposition}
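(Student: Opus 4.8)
The plan is to use the duality of Proposition~\ref{prop:OR-NAND} to collapse the statement to a single case, and then prove the two directions separately. First I would observe that bit-wise complementation exchanges the two halves of the proposition: $R\in\ORconj$ iff $\bwcomp R\in\NANDconj$ by Proposition~\ref{prop:OR-NAND}; $R$ is monotone iff $\bwcomp R$ is antitone (noted just before the statement); and column $i$ of $R$ is constant iff column $i$ of $\bwcomp R$ is constant, so $R$ is pseudo-monotone iff $\bwcomp R$ is pseudo-antitone. Hence it suffices to show that a relation $R\subseteq\Bool^r$ is in \ORconj{} if, and only if, it is pseudo-monotone.

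For the forward direction, take $R\in\ORconj$ with its (unique) normalized defining formula $\phi$ from Lemma~\ref{lemma:conj-norm}. If $R=\emptyset$ it is vacuously pseudo-monotone, so assume $R\neq\emptyset$, whence the pins of $\phi$ are consistent. I would then show that the constant columns of $R$ are exactly the variables pinned by $\phi$: a pinned variable clearly gives a constant column; conversely, for a non-pinned variable $x_j$, the assignment that respects the pins and sets every non-pinned variable to $1$ satisfies $\phi$ (each \OR{} has at least two arguments, none of them pinned, all now set to $1$), and the assignment that respects the pins, sets $x_j=0$, and sets every other non-pinned variable to $1$ also satisfies $\phi$ (an \OR{} containing $x_j$ has another argument, necessarily non-pinned and hence set to $1$), so column $j$ is non-constant. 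The restriction of $R$ to its non-constant columns is then precisely the relation defined by the \OR{}-clauses of $\phi$ alone, and this is monotone, since raising a $0$ to a $1$ cannot falsify an \OR{}. So $R$ is pseudo-monotone.

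For the converse, let $R\subseteq\Bool^r$ be pseudo-monotone, let $C$ be its set of constant columns with column $i\in C$ taking the value $c_i$, and let $R'\subseteq\Bool^s$ be the restriction of $R$ to the non-constant columns; then $R'$ is monotone, has no constant columns, and $R$ is exactly the conjunction of the pins $\{x_i=c_i\mid i\in C\}$ with any formula defining $R'$. It remains to express such an $R'$ as a conjunction of \OR{}s. I would pass to the complement $D=\Bool^s\setminus R'$, which is a down-set because $R'$ is an up-set, and write $D$ as the union of the principal down-sets $\{\abar\mid\abar\leq\overline m\}$ over its maximal elements $\overline m$. Since $\{\abar\mid\abar\leq\overline m\}$ is exactly the set of tuples falsifying the clause $\OR(x_i\mid m_i=0)$, the relation $R'$ equals the conjunction of these clauses over $\overline m\in\max(D)$. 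Finally I would check the clauses are genuine: if $R'=\Bool^s$ take the empty conjunction; otherwise $R'$ is monotone and non-empty so $1^s\in R'$, hence $1^s\notin D$ and every maximal $\overline m$ has a zero coordinate; and if some $\overline m$ had a single zero coordinate $j$, the clause $\OR(x_j)$ would force column $j$ of $R'$ to be constant, contradicting the fact that $R'$ has no constant columns — so every clause has at least two arguments, and $R\in\ORconj$.

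I do not expect a real obstacle. The only point needing care is the bookkeeping around constant columns: in the forward direction, arguing that "constant column of $R$" coincides with "variable pinned in the normalized formula" (which is where normalization — each \OR{} having $\geq 2$ arguments and no pinned arguments — is essential), and in the converse, confirming that deleting $R$'s constant columns leaves a relation with no constant columns of its own, so that each generated \OR{}-clause is non-trivial. The degenerate empty and complete relations each need a one-line treatment.
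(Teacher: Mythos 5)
Your proof is correct, and while its overall architecture matches the paper's (duality via Proposition~\ref{prop:OR-NAND} for the \NANDconj{} half, and ``raising zeroes to ones cannot falsify an \OR{}'' for the forward direction), the converse is argued by a genuinely different construction. The paper takes a monotone $R$, writes each tuple $\abar\in R$ as the conjunction $\bigwedge_{i\in I(\abar)}x_i$, obtains a negation-free DNF for $R$, and then appeals to the general fact that such a formula can be redistributed into a negation-free CNF; you instead pass to the complement $D=\Bool^s\setminus R'$, observe it is a down-set, and read off the CNF directly from the maximal elements of $D$, each $\overline m$ contributing the clause $\OR(x_i\mid m_i=0)$. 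Your route is more explicit (it essentially produces the prime clauses, and lets you verify directly that each clause has at least two arguments when $R'$ has no constant columns), whereas the paper's is shorter but leaves the DNF-to-CNF step abstract. Your forward direction is the same idea as the paper's, with the useful extra observation --- which the paper leaves implicit --- that for a non-empty relation the constant columns coincide exactly with the pinned variables of the normalized formula; this is what justifies restricting attention to the pinned columns when checking pseudo-monotonicity. The only loose ends are the degenerate empty and complete relations, which you flag and which are indeed one-liners (inconsistent pins, respectively the empty formula).
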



\section{Simulating equality}
\label{sec:SimEq}

An important ingredient in bounded-degree dichotomy theorems
\cite{Fed2001:Fanout,CLXxxxx:Complex-numCSP} is showing how to express
equality using constraints from a constraint language that does not
necessarily include the equality relation. In this section, we give
the definitions that we need and some results about when equality can
be expressed in our setting.

Recall that, for all integers $k \geq 2$, $\Reqk$ is the $k$-ary
equality relation $\{0^k\!, 1^k\}$.  We say that a constraint language
$\Gamma$ \emph{simulates} $\Reqk$ if, for some $\ell \geq k$ there is
an integer $m \geq 1$ and a $(\Gamma \cup \GammaPin)$-CSP instance $I$
with variables $x_1, \dots, x_\ell$ and such that $I$ has exactly $m$
satisfying assignments $\sigma$ with $\sigma(x_1) = \dots =
\sigma(x_k) = 0$, exactly $m$ with $\sigma(x_1) = \dots = \sigma(x_k)
= 1$ and no other satisfying assignments.  If, further, the degree of
$I$ is $d$ and the degree of each variable $x_1, \dots, x_k$ is at
most $d-1$, we say that $\Gamma$ \emph{simulates $\Reqk$ with $d$
  variable repetitions} or, for brevity, that $\Gamma$
\emph{$d$-simulates} $\Reqk$.  We say that $\Gamma$
\emph{$d$-simulates equality} if it $d$-simulates $\Reqk$ for all
$k\geq 2$. If only one relation $R$ is involved in the simulation, we
drop the curly brackets and say that $R$, rather than $\{R\}$,
$d$-simulates equality.

The point of this slightly strange definition is that, if $\Gamma$
$d$-simulates equality, we can express the constraint $y_1 = \dots =
y_k$ in $\Gamma \cup \GammaPin$ and then use each $y_i$ in one further
constraint, while still having an instance of degree $d$.  The
variables $x_{k+1}, \dots, x_\ell$ in the definition function as
auxiliary variables and do not appear in any other constraint.  This
means that, if the variable $y$ occurs $k>d$ times in some instance,
we can replace the successive occurrences with distinct variables
$y_1, \dots, y_k$ that are constrained to be equal, giving an
equivalent instance of degree at most $d$.

Concepts similar to simulation have been used before, such as
``perfect implementation''~\cite{CKS2001:Bool-CSP} and
``implementation''~\cite{DGJ2010:Bool-approx}.  The difference is that
our setting demands degree bounds on the constraints used in
simulation and, for counting, we need to preserve the number of
satisfying assignments (at least, up to some constant multiple) not
just the existence of satisfying assignments.

\begin{proposition}
\label{prop:bound-unbound}
    If $\Gamma$ $d$-simulates equality, then $\numCSP(\Gamma) \APredto
    \numCSPd(\Gamma \cup \GammaPin)$.
\end{proposition}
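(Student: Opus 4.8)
The plan is to unfold the definition of $d$-simulation. Starting from an arbitrary-degree instance $I=(V,C)$ of $\numCSP(\Gamma)$, I build a degree-$d$ instance $I'$ of $\numCSP(\Gamma\cup\GammaPin)$ whose number of satisfying assignments is a fixed, efficiently computable multiple of $Z(I)$, and then answer a single oracle query about $I'$. For every variable $v$ occurring $n_v\ge 2$ times among the constraints, introduce fresh copies $y^v_1,\dots,y^v_{n_v}$ and replace the occurrences of $v$, one at a time, by $y^v_1,\dots,y^v_{n_v}$, so that each copy is used exactly once in the rewritten constraints. To glue the copies together, adjoin a fresh copy of the $(\Gamma\cup\GammaPin)$-instance $I_{n_v}$ witnessing that $\Gamma$ $d$-simulates $\Reqk[n_v]$, with its variables $x_1,\dots,x_{n_v}$ identified with $y^v_1,\dots,y^v_{n_v}$ and its remaining variables renamed so that gadgets for distinct $v$ are variable-disjoint; variables occurring at most once in $I$ are left untouched. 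Write $m_k$ for the integer $m$ in the definition of the $d$-simulation of $\Reqk[k]$ and set $M=\prod_{v:\,n_v\ge2}m_{n_v}\ge 1$; the result $I'$ is an instance over $\Gamma\cup\GammaPin$.

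The degree bound is the one place where the precise shape of the definition matters. A copy $y^v_j$ is used once in a rewritten constraint, and inside $I_{n_v}$ it is the variable $x_j$, which has degree at most $d-1$ there by the ``with $d$ variable repetitions'' clause; so its total degree is at most $d$. The auxiliary variables of $I_{n_v}$ live only inside that gadget, which has degree at most $d$, and untouched variables have degree at most $1$. Hence $I'$ has degree at most $d$. This is also why one gadget for $\Reqk[n_v]$ is needed rather than a chain of gadgets for $\Reqk[2]$, which would roughly double the degree of the internal copies.

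Next, $Z(I')=M\cdot Z(I)$. Every satisfying assignment of $I_{n_v}$ sets $x_1=\dots=x_{n_v}$, so a satisfying assignment $\tau$ of $I'$ gives all of $y^v_1,\dots,y^v_{n_v}$ a common value $c_v$; the assignment of $I$ that agrees with $\tau$ on untouched variables and gives $c_v$ to $v$ satisfies $I$, since each rewritten constraint sees exactly the values its original sees. Conversely, a satisfying assignment $\sigma$ of $I$ has as preimages precisely the assignments obtained by putting $y^v_j:=\sigma(v)$ for all $v,j$ and then, independently for each $v$, choosing one of the satisfying assignments of $I_{n_v}$ with $x_1=\dots=x_{n_v}=\sigma(v)$ --- there are $m_{n_v}$ of these regardless of whether $\sigma(v)$ is $0$ or $1$, the choices over distinct $v$ are independent because the auxiliary variable sets are disjoint, and any such completion satisfies all of $I'$. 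So every fibre has size exactly $M$, giving $Z(I')=M\cdot Z(I)$.

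Finally, assemble the AP-reduction: on input $(I,\epsilon)$, compute $I'$ and $M$, query the $\numCSPd(\Gamma\cup\GammaPin)$ oracle on $(I',\epsilon)$ to get $Y$, and return the nearest integer to $Y/M$ (dividing and rounding by the known positive integer $M$ introduces only the usual negligible slack, which we do not belabour). Since $e^{-\epsilon}\le Y/Z(I')\le e^\epsilon$ forces $e^{-\epsilon}\le (Y/M)/Z(I)\le e^\epsilon$, this is a randomized approximation scheme for $Z(I)$ whenever the oracle is one for $\numCSPd(\Gamma\cup\GammaPin)$, and it makes one oracle call with error parameter $\epsilon$. The verifications above are routine; the point that genuinely needs attention for the polynomial running time is that the witnessing gadgets $I_k$ must be producible from $k$ in time polynomial in $k$, so that $|I'|$ and $M$ are polynomially bounded and $M$ is computable --- this holds for all the equality simulations we construct, and I would treat establishing (or invoking) that uniformity as the main obstacle.
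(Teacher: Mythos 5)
Your proposal is correct and follows essentially the same route as the paper: split each high-occurrence variable into one copy per occurrence, glue the copies with the gadget witnessing the $d$-simulation of $\Reqk$, observe that the degree clause in the definition gives the degree-$d$ bound, and divide the oracle's answer by the product of the multiplicities $m_k$. Your closing remark about needing the gadgets $I_k$ to be constructible in time polynomial in $k$ is a fair point that the paper's proof also passes over silently (it simply asserts that $m$ is computable in polynomial time), and it is indeed discharged by the explicit constructions in Lemmas~\ref{lemma:proj-eq} and~\ref{lemma:OR-NAND-eq}.
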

\begin{proof}
    Let $I$ be an instance of $\numCSP(\Gamma)$.  We produce a new CSP
    instance $I'$ over the constraint language $\Gamma$ augmented with
    $\Reqk[i]$ constraints for certain values of $i$ as follows.  For
    each variable $x$ that appears $k>d$ times in $I$, replace the
    occurrences with new variables $x_1, \dots, x_k$ and add the
    constraint $\Reqk(x_1, \dots, x_k)$.  Clearly, $Z(I') = Z(I)$.

    Note that every variable in $I'$ either occurs exactly once in an
    equality constraint (one of the form $\Reqk[i](\xbar)$) and
    exactly once in a $\Gamma$-constraint or occurs in no equality
    constraints and at most $d$ times in $\Gamma$-constraints.  Since
    $\Gamma$ $d$-simulates equality, we can replace the
    equality constraints with $(\Gamma \cup \GammaPin)$-constraints,
    using fresh auxiliary variables for each equality, to give an
    instance $I''$ of $\numCSP(\Gamma \cup \GammaPin)$ with degree
    $d$.  There is some constant $m$, depending only on the number and
    arities of the equality constraints in $I'\!$, such that $Z(I'') =
    mZ(I')$.  Since $m$ can be computed in polynomial time, we have an
    AP-reduction.
\end{proof}

\begin{lemma}
\label{lemma:proj-eq}
    Let $R\subseteq\Bool^r\!$.  If $\Req\pppleq R$, $\Rneq\pppleq R$ or
    $\Rimp\pppleq R$, then $R$ 3-simulates equality.
\end{lemma}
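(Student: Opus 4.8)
Since ``$R$ $3$-simulates equality'' means that $R$ $3$-simulates $\Reqk$ for every $k\geq 2$, the plan is to fix $k\geq 2$ and build a $(\{R\}\cup\GammaPin)$-instance on variables $x_1,\dots,x_\ell$ (with $\ell\geq k$) of degree at most $3$ in which $x_1,\dots,x_k$ have degree at most $2$, whose satisfying assignments are exactly $m$ assignments with $\sigma(x_1)=\dots=\sigma(x_k)=0$ and $m$ with $\sigma(x_1)=\dots=\sigma(x_k)=1$, for some $m\geq 1$. The uniform tool I would use is a \emph{balanced equality gadget}: a $(\{R\}\cup\GammaPin)$-instance $E(u,v)$ in which $u$ and $v$ each occur once, all of whose satisfying assignments have $u=v$, and which has the same number $m\geq 1$ of extensions with $u=v=0$ as with $u=v=1$. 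Given such a gadget, placing $k$ disjoint copies around a cycle $E(x_1,x_2),E(x_2,x_3),\dots,E(x_{k-1},x_k),E(x_k,x_1)$ forces $x_1=\dots=x_k$, gives each $x_i$ degree $2$ and every auxiliary variable degree at most $2$, and produces $m^k$ all-zero and $m^k$ all-one satisfying assignments. So it would suffice to construct a balanced equality gadget in each of the three cases (with a mild variant for $\Rimp$).

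For $\Rneq\pppleq R$: the ppp-derivation yields a single $R$-constraint $N(u,v)$ --- some positions pinned, the remaining non-output positions filled by fresh variables --- that forces $u\neq v$; write $n_0\geq 1$ for the number of extensions of $N(0,1)$ to the fresh variables and $n_1\geq 1$ for those of $N(1,0)$. I would then take $E(u,v):=N(u,z)\wedge N(z,v)$ with $z$ fresh: it forces $u\neq z$ and $z\neq v$, hence $u=v$; it uses $u$ and $v$ once each and $z$ twice; and it has $n_0n_1$ extensions when $u=v=0$ (so $z=1$) and $n_1n_0$ when $u=v=1$ (so $z=0$). This is a balanced equality gadget.

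For $\Req\pppleq R$ or $\Rimp\pppleq R$: the derivation gives a single $R$-constraint with two output positions $i,j$ and with some positions pinned by a partial assignment $\pi$, such that $R_\pi$ (the subrelation of $R$ picked out by $\pi$), projected onto $\{i,j\}$, is $\Req$, respectively $\Rimp$. The naive gadget --- pin by $\pi$, fill the other positions with fresh variables --- forces $u=v$, respectively $u\imp v$, but its numbers of extensions with $u=v=0$ and with $u=v=1$, which are the sizes of the two fibres $A,B\subseteq\Bool^Q$ of $R_\pi$ over $(i,j)=(0,0)$ and $(i,j)=(1,1)$ (where $Q$ indexes the non-output positions that $\pi$ leaves free), need not agree. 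To force agreement I would invoke the following claim: for any nonempty $A,B\subseteq\Bool^Q$ there are a set $P\subseteq Q$ and an assignment $\rho\colon P\to\Bool$ such that $A_\rho:=\{\abar\in A:\abar|_P=\rho\}$ and $B_\rho$ are both nonempty with $|A_\rho|=|B_\rho|$. Pinning by $\pi$ and by $\rho$ then gives a single-$R$-constraint gadget $M(u,v)$ whose $(1,0)$ output pattern is still forbidden and whose extension counts with $u=v=0$ and $u=v=1$ are both $m:=|A_\rho|\geq 1$; thus $M$ forces $u=v$ or $u\imp v$, and in either case putting $k$ copies around a cycle $M(x_1,x_2),\dots,M(x_k,x_1)$ forces $x_1=\dots=x_k$ (a forbidden $(1,0)$ pattern makes the values nondecreasing, hence constant, around the cycle) with $m^k$ all-zero and $m^k$ all-one satisfying assignments.

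The step I expect to be the main obstacle is the claim about $A$ and $B$; I would prove it by induction on $|Q|$. The base case $|Q|=0$ is trivial: $A$ and $B$ are both the singleton set consisting of the empty tuple, so take $P=\emptyset$. For the step, pick $c\in Q$: if some value $b\in\Bool$ appears in coordinate $c$ of a tuple of $A$ and of a tuple of $B$, pin $c$ to $b$ and apply induction to the subrelations on $Q\setminus\{c\}$ obtained by keeping only the tuples with $c$-entry $b$ and deleting coordinate $c$ (both are nonempty by choice of $b$, and deleting a pinned coordinate preserves all fibre sizes); otherwise, since $A$ and $B$ are nonempty, each is constant in coordinate $c$ with opposite values, so coordinate $c$ is irrelevant to every fibre size, and I would delete it from both and apply induction on $Q\setminus\{c\}$. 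Lifting the pair $(P,\rho)$ returned by induction --- adjoining $c\mapsto b$ in the first sub-case, leaving $c$ out of $P$ in the second --- establishes the claim, and with it the lemma.
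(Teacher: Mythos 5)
Your proof is correct, and its skeleton coincides with the paper's: force equality with a cycle of binary-projection gadgets and arrange that the $(0,0)$- and $(1,1)$-fibres of each gadget admit the same number of extensions. Your $\Rneq$ case is literally the paper's construction (a doubled cycle through fresh intermediate variables, balanced automatically at $n_0n_1$ per link), and your observation that a forbidden $(1,0)$ output pattern forces constancy around a cycle is how the paper handles $\Rimp$ as well. The genuine divergence is in how the two fibres are balanced for $\Req$ and $\Rimp$. The paper handles imbalance ($\alpha\neq\gamma$) by induction on the arity of $R$: it finds a column on which two $(0,0)$-fibre tuples disagree, pins it to the entry of some fixed $(1,1)$-fibre tuple (so both fibres stay nonempty), and re-invokes the entire simulation statement one arity lower, bottoming out at $r=2$ where both fibres are singletons. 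You instead extract the balancing as a standalone combinatorial lemma --- any two nonempty $A,B\subseteq\Bool^Q$ can be cut down by pinning a subset of coordinates to become nonempty and equinumerous --- proved by induction on $|Q|$, with the one delicate case (a coordinate on which $A$ and $B$ take disjoint constant values) correctly dispatched by noting that such a coordinate is size-irrelevant and can be dropped. Your organisation is more modular and makes the final count ($m^k$ on each side) transparent, at the price of an extra lemma; the paper avoids stating the lemma but entangles the balancing with the gadget construction and with the recursion on arity. Both arguments are sound, and your degree accounting (each $x_i$ of degree $2$, pinned auxiliaries of degree $2$, free auxiliaries of degree $1$) meets the definition of $3$-simulation.
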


Note that, if $\Rpmi\pppleq R$ then $\Rimp\pppleq R$, also.

\begin{proof}[Proof of Lemma~\ref{lemma:proj-eq}]
    For each $k\geq 2$, we show how to 3-simulate $\Reqk$.  We may
    assume without loss of generality that the ppp~definition of
    $\Req$, $\Rneq$ or $\Rimp$ from $R$ involves applying the identity
    permutation to the columns, pinning columns 3 to $3+p-1$ inclusive
    to zero, pinning columns $3+p$ to $3+p+q-1$ inclusive to one (that
    is, pinning $p\geq 0$ columns to zero and $q\geq 0$ to one)
    and then projecting away all but the first two columns.

    Suppose first that $\Req\pppleq R$ or $\Rimp\pppleq R$.  $R$
    must contain $\alpha\geq 1$ tuples that begin $000^p1^q$,
    $\beta\geq 0$ that begin $010^p1^q$ and $\gamma\geq 1$ that begin
    $110^p1^q$, and we have $\beta=0$ unless we are ppp-defining $\Rimp$.

    We consider, first, the case where $\alpha=\gamma$, and show that
    we can 3-simulate $\Reqk$, expressing the constraint $\Reqk(x_1,
    \dots, x_k)$ with the constraints
    \begin{equation*}
        R(x_1 x_2 0^p 1^q *), \ R(x_2 x_3 0^p 1^q *), \dots,
	    \ R(x_{k-1} x_k 0^p 1^q *), \ R(x_k x_1 0^p 1^q *) \,,
    \end{equation*}
    where $*$ denotes a fresh $(r-2-p-q)$-tuple of variables in each
    constraint.  This set of constraints is equivalent to either $x_1
    = \dots = x_k = x_1$ or $x_1\imp \cdots \imp x_k \imp x_1$ so, in either
    case, constrains the variables $x_1, \dots, x_k$ to have the same
    value, as required.  Every variable appears at most twice and
    there are $\alpha^k$ solutions to these constraints that put
    $x_1=\dots=x_k=0$, the same number with $x_1=\dots=x_k=1$ and no
    other solutions.  Therefore, $R$ 3-simulates $\Reqk$.

    We now show, by induction on $r$, the arity of $R$, that we can
    3-simulate $\Reqk$ even if $\alpha$ is not necessarily equal to
    $\gamma$.  For the base case, $r=2$, we have $\alpha=\gamma=1$ and
    we are done.  For the inductive step, let $r>2$ and assume,
    without loss of generality that $\alpha>\gamma$ (we are already
    done if $\alpha=\gamma$ and the case $\alpha<\gamma$ is
    symmetric).  In particular, we have $\alpha\geq 2$, so there are
    distinct tuples $000^p1^q\abar$ and $000^p1^q\bbar$ in $R$.  $R$
    also contains a tuple $110^p1^q\cbar$.  Choose $j$ such that
    $a_j\neq b_j$.  Pinning the $(2+p+q+j)$th column of $R$ to $c_j$
    and projecting out the resulting constant column gives a relation
    of arity $r-1$ that still contains at least one tuple beginning
    $000^p1^q$ and at least one beginning $110^p1^q$: by the
    inductive hypothesis, this relation 3-simulates $\Reqk$.

    Finally, we consider the case that $\Rneq\pppleq R$.  $R$ contains
    $\alpha\geq 1$ tuples beginning $010^p1^q$ and $\beta\geq 1$
    beginning $100^p1^q$ and no other tuples.  We express the
    constraint $\Reqk(x_1,
    \dots, x_k)$ by introducing fresh variables $y_1, \dots, y_k$ and
    using the constraints
    \begin{gather*}
                R(x_1 y_1 0^p1^q*), R(y_1 x_2 0^p1^q*),             \\
                R(x_2 y_2 0^p1^q*), R(y_2 x_3 0^p1^q*),             \\
	                        \vdots                               \\
        R(x_{k-1} y_{k-1} 0^p1^q*), R(y_{k-1} x_k 0^p1^q*),         \\
                R(x_k y_k 0^p1^q*), R(y_k x_1 0^p1^q*)\,,
    \end{gather*}
    where $*$ denotes a fresh $(r-2-p-q)$-tuple of variables in each
    constraint, as before.  These constraints have $\alpha^k \beta^k$
    solutions with $x_1 = \dots = x_k = 0$ and $y_1 = \dots = y_k = 1$
    and $\beta^k \alpha^k$ solutions that assign 1 to all the $x$'s
    and 0 to all the $y$'s.  There are no other solutions and no
    variable is used more than twice.
\end{proof}

The following technical lemma and the definitions that support it are
used only to prove Lemma~\ref{lemma:valid-eq}. For $c\in \Bool$, an
$r$-ary relation is \emph{$c$-valid} if it contains the tuple $c^r\!$.  Given
a relation $R\subseteq \Bool^r\!$, a tuple $\abar\in R$ that contains
both zeroes and ones and a constant $c\in\Bool$, let $R_{\abar, c}$ be
the result of pinning the set of columns $\{i\mid a_i=c\}$ to $c$ and
then projecting out those columns.  Observe that $R_{\abar, c}$ is
always $(1-c)$-valid (because it contains the projection of $\abar$)
and is $c$-valid if $R$ is (because then it contains the projection of
$c^r$).

\begin{lemma}
\label{lemma:R-abar-c}
    Let $r\geq 3$ and let $\Reqk[r]\subsetneq R \subsetneq
    \Bool^r\!$. There are $\abar\in R$ and $c\in\Bool$ such that
    $R_{\abar,c}$ is not complete.
\end{lemma}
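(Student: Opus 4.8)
The plan is to argue by contradiction: suppose that $R_{\abar,c}$ is complete for every non-constant $\abar\in R$ and every $c\in\Bool$, and deduce that $R=\Bool^r$, contradicting $R\subsetneq\Bool^r$.

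The first step is to restate completeness of $R_{\abar,c}$ in order-theoretic terms. Fix a non-constant $\abar\in R$ and write $I=\{i:a_i=1\}$. Pinning the zero-columns of $\abar$ to $0$ and projecting them away leaves exactly the columns indexed by $I$, so $R_{\abar,0}$ is complete precisely when every tuple $\cbar$ with $c_k=0$ for all $k\notin I$ lies in $R$ --- equivalently, when $R$ contains the whole down-set $\{\cbar:\cbar\leq\abar\}$. Dually, $R_{\abar,1}$ is complete exactly when $R$ contains the up-set $\{\cbar:\cbar\geq\abar\}$. Thus the assumption says: for every non-constant $\abar\in R$, both the down-set and the up-set of $\abar$ lie inside $R$.

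Next I would bootstrap from one non-constant tuple up to all of $\Bool^r$. Since $\Reqk[r]\subsetneq R$, there is a non-constant $\abar\in R$; choose a coordinate $i$ with $a_i=1$. The tuple $e_i$ with a single $1$ in coordinate $i$ satisfies $e_i\leq\abar$, so it lies in $R$, and it is non-constant because $r\geq 2$. Applying the assumption to $e_i$ puts the up-set of $e_i$ --- that is, every tuple with a $1$ in coordinate $i$ --- into $R$. Now, for each $j\neq i$, the tuple $e_i\vee e_j$ has a $1$ in coordinate $i$, so it is in $R$, and since $r\geq 3$ it is non-constant; its down-set contains $e_j$, so $e_j\in R$. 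Hence every $e_j$ lies in $R$, each is non-constant, and applying the assumption to each $e_j$ shows that $R$ contains every tuple with a $1$ in coordinate $j$. Letting $j$ range over all coordinates, $R$ contains every non-zero tuple; since $0^r\in\Reqk[r]\subseteq R$ as well, we get $R=\Bool^r$, the required contradiction.

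The argument is short once completeness has been rephrased via up- and down-sets, so I do not expect a serious obstacle; the only point that needs care is the bookkeeping on the hypotheses. The condition $\Reqk[r]\subsetneq R$ is used only to produce a starting non-constant tuple (and to know $0^r\in R$), $R\subsetneq\Bool^r$ is what the contradiction violates, and $r\geq 3$ is exactly what is needed to make $e_i\vee e_j$ non-constant --- for $r=2$ this tuple is $11$ and the bootstrap step fails, which is precisely why the statement excludes small $r$.
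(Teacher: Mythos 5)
Your proof is correct, and it rests on the same key observation as the paper's: completeness of $R_{\abar,0}$ (resp.\ $R_{\abar,1}$) is equivalent to $R$ containing the entire down-set (resp.\ up-set) of $\abar$. The paper runs the argument in the other direction --- it locates a tuple of $R$ with a Hamming-neighbour outside $R$ and uses that neighbour directly as the witness of incompleteness --- whereas you assume all the $R_{\abar,c}$ are complete and bootstrap through the weight-one tuples $e_i$ to force $R=\Bool^r$; this is a minor restructuring, not a different method, and your version has the small merit of making explicit the closure argument that the paper leaves terse in its final sentence.
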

\begin{proof}
    Suppose there is a tuple $\abar\in R\setminus\{0^r\}$ such that
    changing some zero in $\abar$ to a one gives a tuple $\abar'\notin
    R$.  Then $R_{\abar, 1}$ does not contain the relevant projection
    of $\abar'$ and we are done.  Similarly, if there is a tuple
    $\bbar\in R\setminus \{1^r\}$ that leaves $R$ by changing some one
    to a zero, then $R_{\bbar, 0}$ is not complete.  If no such tuple
    exists, then either $R=\Bool^r$ or $R=\Reqk[r]$, contradicting our
    assumptions.
\end{proof}

\begin{lemma}
\label{lemma:valid-eq}
    Let $r\geq 2$ and let $R\subset \Bool^r$ be 0- and 1-valid but not
    complete. Then $R$ 3-simulates equality.
\end{lemma}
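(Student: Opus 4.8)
The plan is to reduce everything to Lemma~\ref{lemma:proj-eq}, which already tells us that any relation $R$ with $\Req\pppleq R$ or $\Rimp\pppleq R$ $3$-simulates equality. So it suffices to prove the purely combinatorial claim: if $R\subseteq\Bool^r$ is $0$- and $1$-valid but not complete, then $\Req\pppleq R$ or $\Rimp\pppleq R$. I would prove this by strong induction on the arity $r$.

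For the base case $r=2$: since $R$ contains $00$ and $11$ but is not all of $\Bool^2$, it is one of $\{00,11\}$, $\{00,01,11\}$, $\{00,10,11\}$, which are $\Req$, $\Rimp$ and (after a column swap) $\Rimp$ respectively, so the claim holds. For the inductive step $r\geq 3$: if $R=\{0^r,1^r\}$, project out any $r-2$ columns to obtain $\Req$, so $\Req\pppleq R$. Otherwise $\Reqk[r]=\{0^r,1^r\}\subsetneq R\subsetneq\Bool^r$, and Lemma~\ref{lemma:R-abar-c} provides $\abar\in R$ (necessarily containing both a $0$ and a $1$) and $c\in\Bool$ with $R_{\abar,c}$ not complete. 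Now $R_{\abar,c}$ is obtained from $R$ by pinning the columns on which $\abar$ equals $c$ and projecting them away, so $R_{\abar,c}\pppleq R$; by the observation stated just before Lemma~\ref{lemma:R-abar-c} it is $(1-c)$-valid, and it is $c$-valid because $R$ is, so it is $0$- and $1$-valid. Its arity $r'$ lies strictly between $0$ and $r$ (at least one column is pinned away, since $\abar$ has an entry equal to $c$, and at least one survives, since $\abar$ has an entry equal to $1-c$), and $r'=1$ is impossible because the only $0$- and $1$-valid unary relation is $\{0,1\}$, which is complete. Hence $2\leq r'<r$, the induction hypothesis gives $\Req\pppleq R_{\abar,c}$ or $\Rimp\pppleq R_{\abar,c}$, and transitivity of $\pppleq$ carries this over to $R$. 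This proves the claim, and Lemma~\ref{lemma:proj-eq} then yields that $R$ $3$-simulates equality.

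I do not anticipate a real obstacle here, as everything reduces to Lemmas~\ref{lemma:R-abar-c} and~\ref{lemma:proj-eq}. The two points needing a moment's care are verifying that the arity strictly decreases (using that the chosen $\abar$ genuinely mixes $0$s and $1$s, so the pin-and-project step removes at least one but not all columns) and that it cannot reach $1$, and confirming that $0$- and $1$-validity survives that step; both are immediate from the observation preceding Lemma~\ref{lemma:R-abar-c}.
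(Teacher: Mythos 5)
Your proof is correct and follows essentially the same route as the paper's: induction on the arity, treating $R=\Reqk[r]$ separately, otherwise invoking Lemma~\ref{lemma:R-abar-c} to pass to a smaller 0- and 1-valid, non-complete relation, and concluding via Lemma~\ref{lemma:proj-eq}. The only difference is that you spell out the well-foundedness of the induction (the arity strictly decreases and cannot drop to~1) more explicitly than the paper does.
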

\begin{proof}
    We show by induction on $r$ that either $\Req$ or $\Rimp$ is
    ppp-definable in $R$, and the result follows by
    Lemma~\ref{lemma:proj-eq}.

    In the case $r=2$, $R$ is either $\Req$, $\Rimp$ or $\Rpmi$.
    For $r\geq 3$, if
    $R=\Reqk[r]$ then $\proj{1,2} R = \Req$.  Otherwise, by Lemma~\ref{lemma:R-abar-c},
    there is some $\abar\in R$ and $c\in\Bool$ such that $R_{\abar,c}$
    is not complete.  Since $R_{\abar,c}\pppleq R$ and is 0- and
    1-valid, we are done by the inductive hypothesis.
\end{proof}

We will next show that, if binary OR is ppp-definable in $R$ and
binary NAND in $R'\!$, then the constraint language $\{R, R'\}$
3-simulates equality ($R$ and $R'$ need not be distinct).  To do this,
we will use the following sets of
constraints, $\xi_k$, for $k\geq 2$:
\begin{align*}
    \xi_k = &\ \{\Ror(x_i, y_i) \mid 1\leq i\leq k\} \\
            &\ \cup \{\Rnand(y_i, x_{i+1}) \mid 1\leq i< k\}
                \cup \{\Rnand(y_k, x_{1})\}\,.
\end{align*}

The key point about these constraints is that they show that the
language $\{\Ror, \Rnand\}$ 3-simulates equality.

\begin{lemma}
\label{lemma:xi-sat}
    An assignment $\sigma$ to $\{x_1, \dots, x_k, y_1, \dots, y_k\}$
    satisfies all constraints in $\xi_k$ if, and only if, $\sigma(x_1)
    = \dots = \sigma(x_k) \neq \sigma(y_1) = \dots = \sigma(y_k)$.
\end{lemma}
\begin{proof}
    It is easy to check that assignments of the given type satisfy
    $\xi_k$.  Conversely, suppose that $\sigma$ satisfies $\xi_k$.

    If $\sigma(x_1) = 0$, we have $\sigma(y_1) = 1$ because
    $\Ror(x_1,y_1)$ is satisfied and we must have $\sigma(x_2) = 0$
    because $\Rnand(y_1, x_2)$ is satisfied.  By a trivial induction,
    $\sigma(x_i) = 0$ and $\sigma(y_i) = 1$ for all $i$.

    Otherwise, $\sigma(x_1) = 1$.  If $\sigma(x_i) = 0$ for any $i>1$
    then, by the same argument as above, $\sigma(x_i) = 0$ for all
    $i\in [1,k]$, contradicting the assumption that $\sigma(x_1) = 1$.
    Therefore, $\sigma(x_i) = 1$ for all $i$.  To satisfy the
    constraints $\Rnand(y_i, x_{i+1})$, we must have $\sigma(y_i) = 0$
    for all $i$.
\end{proof}

We now show that, in fact, we do not need to have $\Ror$ and $\Rnand$
in our constraint language $\Gamma$: it suffices to be able to
ppp-define them from relations in $\Gamma\!$.

\begin{lemma}
\label{lemma:OR-NAND-eq}
    If $\Ror\pppleq R$ and $\Rnand\pppleq R'$ then $\{R,R'\}$ 3-simulates equality.
\end{lemma}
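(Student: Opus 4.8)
The plan is to produce, for every even $k \geq 2$, a single $(\{R,R'\}\cup\GammaPin)$-CSP instance that $3$-simulates $\Reqk$. This will suffice: if an instance $3$-simulates $\Reqk$ then the very same instance $3$-simulates $\Reqk[j]$ for every $j$ with $2 \leq j \leq k$, because its only satisfying assignments already have $\sigma(x_1) = \dots = \sigma(x_k)$, so requiring merely $\sigma(x_1) = \dots = \sigma(x_j)$ to be constant picks out exactly the same assignments and the same pair of counts; and since the even integers are cofinal in $\{2,3,\dots\}$, this yields $3$-simulation of $\Reqk[j]$ for all $j \geq 2$, i.e.\ of equality. The backbone will be the construction $\xi_k$ together with Lemma~\ref{lemma:xi-sat}: its $2k$ binary $\OR$/$\NAND$ constraints force $\sigma(x_1) = \dots = \sigma(x_k) \neq \sigma(y_1) = \dots = \sigma(y_k)$, and every variable appears exactly twice.

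First I would implement each $\OR$ and each $\NAND$ constraint of $\xi_k$ by a gadget coming from the ppp-definitions $\Ror \pppleq R$ and $\Rnand \pppleq R'$. For $\OR(a,b)$ the gadget is one fresh copy of an $R$-constraint in which the two columns that survive $\Ror$'s ppp-definition hold $a$ and $b$, the columns that its ppp-definition pins hold fresh variables constrained by $\Rzero$ or $\Rone$ accordingly, and the projected columns hold fresh, otherwise unconstrained variables; the gadget for $\OR(b,a)$ is identical but with the two output columns interchanged. Such a gadget is satisfiable precisely for the output values $(\alpha,\beta) \in \Ror$, in which case it has a fixed number $n_{\alpha\beta} \geq 1$ of extensions to its internal variables (its pinned internal variables being uniquely determined and so contributing nothing to the count), and interchanging the two outputs replaces the pair $(n_{01},n_{10})$ by $(n_{10},n_{01})$ while leaving $n_{11}$ unchanged; the pinned internal variables have degree $2$, the projected internal variables degree $1$, and each of $a,b$ occurs once in the gadget. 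Analogously $\Rnand \pppleq R'$ gives a gadget for $\NAND(a,b)$, with multiplicities $n'_{\alpha\beta} \geq 1$ for $(\alpha,\beta) \in \Rnand$. Replacing every constraint of $\xi_k$ by the appropriate gadget then produces an instance whose restriction to $\{x_i,y_i\}$ is satisfiable exactly when $\xi_k$ is --- here using that $\Ror$ and $\Rnand$ are symmetric, so that a gadget's orientation does not affect satisfiability --- hence, by Lemma~\ref{lemma:xi-sat}, exactly when $\sigma(x_1) = \dots = \sigma(x_k) \neq \sigma(y_1) = \dots = \sigma(y_k)$; the instance has degree at most $2$, with each $x_i$ of degree $2$.

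The step I expect to be the main obstacle is that the gadget multiplicities need not be uniform: a plain copy of $\xi_k$ with gadgets substituted has $(n_{01}n'_{10})^{k}$ satisfying assignments with the $x_i$ all equal to $0$ but $(n_{10}n'_{01})^{k}$ with them all equal to $1$ (only the edge-values $(0,1)$ and $(1,0)$ ever occur), and these two numbers can differ, so the instance need not $3$-simulate $\Reqk$. The fix, and the one genuinely new idea, is to balance the orientations, exploiting that $k$ is even: orient the $i$-th $\OR$-gadget as $\OR(x_i,y_i)$ when $i \leq k/2$ and as $\OR(y_i,x_i)$ when $i > k/2$, and likewise split the $k$ $\NAND$-gadgets into two equal halves of opposite orientation. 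Then in each of the two admissible assignment types exactly $k/2$ of the $\OR$-gadgets contribute a factor $n_{01}$ and the other $k/2$ contribute $n_{10}$, and symmetrically for the $\NAND$-gadgets, so the number of satisfying assignments with $\sigma(x_1) = \dots = \sigma(x_k) = 0$ equals the number with $\sigma(x_1) = \dots = \sigma(x_k) = 1$, both being $m := (n_{01}n_{10}n'_{01}n'_{10})^{k/2} \geq 1$, and there are no other satisfying assignments. Since this instance has more than $k$ variables and degree at most $3$ with each $x_i$ of degree at most $2$, it $3$-simulates $\Reqk$, completing the argument. In a full write-up the only delicate points should be the degree bookkeeping for the internal gadget variables and the verification that the half-and-half orientation really does equalise the two counts; everything else follows directly from Lemma~\ref{lemma:xi-sat}.
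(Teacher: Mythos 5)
Your proof is correct, but it resolves the one genuine difficulty by a different device than the paper. Both arguments share the same backbone: substitute for each constraint of $\xi_k$ a gadget built from the ppp-definitions of $\Ror$ in $R$ and $\Rnand$ in $R'$ (pinned columns become fresh pinned variables, projected columns become fresh free variables), and invoke Lemma~\ref{lemma:xi-sat} to see that the surviving assignments are exactly the two intended ones. The difficulty you correctly isolate --- that the multiplicities $n_{01}$ and $n_{10}$ of an oriented gadget may differ, unbalancing the two counts --- is handled in the paper by a structural case split: assuming a maximal set of columns is pinned in the ppp-definition, the authors show that $n_{01}\neq n_{10}$ forces $\Rneq\pppleq R$, whence $R$ alone 3-simulates equality by Lemma~\ref{lemma:proj-eq}; only in the balanced case $n_{01}=n_{10}$ do they run the plain $\Xi_k$ substitution. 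You instead keep the substitution in all cases and equalise the counts combinatorially, by taking $k$ even and reversing the orientation of half the $\OR$-gadgets and half the $\NAND$-gadgets (legitimate since $\Ror$ and $\Rnand$ are symmetric, so satisfiability is unaffected and only multiplicities swap), giving $m=(n_{01}n_{10}n'_{01}n'_{10})^{k/2}$ on both sides; your preliminary observation that simulating $\Reqk$ also simulates $\Reqk[j]$ for all $j\leq k$ with the same instance is a correct reading of the definition and correctly recovers odd arities. Your route is more self-contained (it does not need the maximal-pinning argument or Lemma~\ref{lemma:proj-eq} at all), while the paper's route yields the incidental structural fact that unbalanced multiplicities already give $\Rneq\pppleq R$. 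The degree bookkeeping in your sketch (each $x_i$ and $y_i$ of degree 2, pinned internals of degree 2, projected internals of degree 1) is right, so the instance indeed witnesses 3-simulation.
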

\begin{proof}
    Suppose first that $R$ and $R'$ are two distinct relations.
    We may assume, as in the proof of Lemma~\ref{lemma:proj-eq}, that
    the ppp definition of $\Ror$ from
    $R$ involves performing some permutation and projecting to the
    first two columns after pinning the next $p$ columns to zero and
    the $q$ columns after that to one.  We may suppose further that
    we cannot pin any more columns of $R$ and still
    ppp-define $\Ror$.  Without loss of generality, we may assume the
    permutation to be the identity.

    Under these assumptions, $R$ contains $\alpha\geq 1$ tuples
    beginning $010^p1^q\!$, $\beta\geq 1$ tuples beginning $100^p1^q$
    and $\gamma\geq 1$ tuples beginning $110^p1^q\!$, but none
    beginning $000^p1^q\!$.  We first show that, if $\alpha\neq
    \beta$, then we are done because $\Rneq\pppleq R$, so
    $R$ 3-simulates equality by Lemma~\ref{lemma:proj-eq}

    To this end, suppose $\alpha>\beta$ so, in particular, $\alpha\geq
    2$ and there are distinct tuples $010^p1^q\abar$ and
    $010^p1^q\bbar$ in $R$.  We may assume, without loss of
    generality, that $a_1\neq b_1$.  Since $\beta\geq 1$, there is at
    least one tuple $100^p1^q\cbar\in R$.  Suppose, now that we pin
    the $(2+p+q+1)$th column of $R$ to $c_1$.  $R$ cannot contain any
    tuple $110^p1^q\dbar$ with $d_1=c_1$ because it is not possible to
    pin more columns and still ppp-define $\Ror$.  But then $R$
    contains tuples beginning with each of $010^p1^qc_1$ and
    $100^p1^qc_1$ and none beginning $000^p1^qc_1$ or $110^p1^qc_1$,
    so $\Rneq\pppleq R$.  We similarly have $\Rneq\pppleq R$ if
    $\alpha<\beta$.  From this point, we may assume that
    $\alpha=\beta$.

    Similarly, either $\Rneq\pppleq R'\!$, so we are done,
    or $R'$ contains $\alpha'$ tuples beginning with each of
    $010^{p'}1^{q'}$ and $100^{p'}1^{q'}\!$, $\gamma'$ tuples
    beginning $000^{p'}1^{q'}$ and no tuples beginning
    $110^{p'}1^{q'}\!$.

    We now show how to simulate equality.  We can 3-simulate $\Reqk$
    by replacing the constraint $\Reqk(x_1, \dots, x_k)$ with the
    following set of constraints, modelled on $\xi_k$:
\begin{align*}
    \Xi_k = &\ \{R(x_i y_i 0^p 1^q *) \mid 1\leq i\leq k\} \\
            &\ \cup \{R'(y_i x_{i+1} 0^{p'} 1^{q'} *) \mid 1\leq i< k\}
                    \cup \{R'(y_k x_1 0^{p'} 1^{q'} *)\}\,,
\end{align*}
    where the $y_i$ are fresh variables and, as before, $*$ denotes a
    fresh tuple of variables for each constraint, of the appropriate
    length.  By Lemma~\ref{lemma:xi-sat}, an assignment $\sigma$
    satisfies $\Xi_k$ if, and only if, $\sigma(x_1) = \dots =
    \sigma(x_k) \neq \sigma(y_1) = \dots = \sigma(y_k)$.

    Further, there are $\alpha$ ways to satisfy the variables denoted
    by $*$ in each $R$ constraint and $\alpha'$ ways in each $R'$
    constraint.  Therefore, there are $(\alpha\alpha')^k$ satisfying
    assignments for $\Xi_k$ corresponding to each satisfying
    assignment for $\Reqk$ and we are done.

    Notice that our assumption that the ppp~definitions of $\Ror$ in
    $R$ and $\Rnand$ in $R'$ involve the identity permutation, pinning
    sequential columns to zero and one and projecting to the first two
    columns was made only for the notational convenience of referring to
    ``tuples beginning $010^p1^q$'' and so on.  This being the case,
    there is no requirement that $R$ and $R'$ be distinct, so the
    proof is complete.
\end{proof}

Note that there are relations, such as $\Reqk[3]$, that 2-simulate
equality, though we do not require this, here, so we omit the proof.


\section{Classifying relations}
\label{sec:Trichotomy}

We are now ready to prove that every Boolean relation $R$ is in \ORconj{}, in \NANDconj{} or 3-simulates equality. Given $r$-ary relations $R_0$ and $R_1$, we write $R_0+R_1$ for the relation $\{0\abar\mid \abar\in R_0\} \cup \{1\abar\mid \abar\in R_1\}$. The proof of the classification is by induction on the arity of $R$ and proceeds by decomposing $R$ as $R_0+R_1$.

Recall that a width-zero \ORconj{} (or, equivalently, \NANDconj{})
relation is a complete relation, possibly padded with some constant
columns.

\begin{lemma}
    Let $R_0, R_1\subseteq \ORconj$ have arity $r$ and width zero and
    let $R=R_0+R_1$.  Then, $R\in\ORconj$, $R\in\NANDconj$ or $R$
    3-simulates equality.
\end{lemma}
\begin{proof}
    We may assume that $R$ has no constant columns, since adding or
    removing them does not affect whether a relation is
    \ORconj{} or \NANDconj{} or whether it 3-simulates equality.

    For $i \in [2,r+1]$, let $R'_i = \proj{1,i} R$, so each $R'_i\pppleq
    R$.  If any $R'_i$ is $\Req$, $\Rneq$, $\Rimp$ or $\Rpmi$ then $R$
    3-simulates equality by Lemma~\ref{lemma:proj-eq}.  Otherwise,
    each $R'_i$ is either $\Bool^2\!$, $\Ror$ or $\Rnand$.
    If $R'_j=\Ror$ and $R'_k=\Rnand$ for some $j$ and $k$, then $R$
    3-simulates equality by Lemma~\ref{lemma:OR-NAND-eq}.  Otherwise,
    if no $R'_i=\Rnand$, let $I = \{i\mid R'_i=\Ror\}$.  Then,
    \begin{equation*}
        R = \bigwedge_{i\in I} \OR(x_1,x_i)\,,
    \end{equation*}
    so $R\in \ORconj$.  If no $R'_i=\Ror$, then $R\in\NANDconj$, by a
    similar argument.
\end{proof}

\begin{lemma}
\label{lemma:OR-OR}
    Let $R_0, R_1\subseteq \Bool^r$ be \ORconj{} and let $R=R_0+R_1$.
    Then, $R\in\ORconj$, $R\in\NANDconj$ or $R$ 3-simulates equality.
\end{lemma}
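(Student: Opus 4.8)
The plan is to analyse the two \ORconj{} relations $R_0$ and $R_1$ through their normalized defining formulae (Lemma~\ref{lemma:conj-norm}) and split into cases according to which variables are pinned in each. Write $R_0$ as a conjunction of pins $P_0$ and disjunctions $D_0$, and similarly $R_1 = P_1 \wedge D_1$; throughout, column~1 is the ``selector'' column distinguishing $R_0$ from $R_1$. The first observation is that if a column $i\geq 2$ is pinned to the same value $c$ in \emph{both} $R_0$ and $R_1$, then column $i$ of $R$ is constant, and we may pin it and project it out, reducing to a lower-arity instance of the same lemma; so by induction we may assume no column $i\geq 2$ is pinned to the same value in both $R_0$ and $R_1$. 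Likewise, if $R_0 = \emptyset$ then $R = \{1\abar \mid \abar\in R_1\}$, which is pseudo-monotone (column~1 is constant), hence in \ORconj{} by Proposition~\ref{prop:OR-monotone}; similarly if $R_1=\emptyset$. So assume both are nonempty.

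Next I would test $R$ directly against the monotonicity/antitonicity criterion of Proposition~\ref{prop:OR-monotone}. Since $R_0$ and $R_1$ are both pseudo-monotone, the only obstruction to $R$ being pseudo-monotone on its non-constant columns involves the selector column: increasing column~1 from $0$ to $1$ must not take a tuple out of $R$, i.e.\ we would need $R_0 \subseteq R_1$ (on the common non-constant columns). Symmetrically, $R$ is pseudo-antitone essentially iff $R_1\subseteq R_0$. So if $R_0\subseteq R_1$ we are in \ORconj{} and if $R_1\subseteq R_0$ we are in \NANDconj{} (one must be a little careful about columns that are pinned in one of $R_0,R_1$ but free in the other — these become non-constant columns of $R$ on which we must check the relevant monotonicity; pinning in $R_c$ to $1$ is compatible with monotonicity ``from below'' by $0\abar$, pinning to $0$ with antitonicity, and the mixed cases are exactly what forces the equality simulation below). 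The remaining case is that neither $R_0\subseteq R_1$ nor $R_1\subseteq R_0$: there is a tuple $\abar\in R_0\setminus R_1$ and a tuple $\bbar\in R_1\setminus R_0$.

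In that remaining case the goal is to show $R$ 3-simulates equality, and by Lemmas~\ref{lemma:proj-eq}, \ref{lemma:OR-NAND-eq} and \ref{lemma:valid-eq} it suffices to ppp-define from $R$ one of $\Req$, $\Rneq$, $\Rimp$, or else both $\Ror$ and $\Rnand$ (or to exhibit $R$ as $0$- and $1$-valid but incomplete). The idea is to use the selector column together with the incomparable witnesses $\abar,\bbar$: by pinning the non-selector columns appropriately (using the freedom that $\abar\not\leq\bbar$ and $\bbar\not\leq\abar$ so there are coordinates where they differ in each direction) I would isolate a binary sub-relation on column~1 and one well-chosen other column that is not pseudo-monotone and not pseudo-antitone, hence is $\Rneq$, or is one of $\Req/\Rimp$; alternatively, I would separately ppp-define $\Ror$ (from $R_0$, which has some disjunction of width $\geq 2$ once we are past the degenerate cases, via Lemma~\ref{lemma:ORconj-OR}) and $\Rnand$ — but $R_0,R_1\in\ORconj$ cannot themselves supply a \NAND, so the \NAND{} must come from combining the two halves across the selector column, which is precisely where the incomparability of $\abar$ and $\bbar$ is used. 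The main obstacle I anticipate is the careful bookkeeping in this last case: organizing the sub-cases (which coordinates are pinned in $R_0$ vs.\ $R_1$, to which values, and where $\abar,\bbar$ disagree) so that in every sub-case one of the four ppp-targets genuinely appears, without gaps. This is the combinatorial heart of the lemma; the earlier reductions are routine.
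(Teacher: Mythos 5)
There is a genuine gap in your containment cases. You claim that $R_0\subseteq R_1$ forces $R\in\ORconj$ (and dually $R_1\subseteq R_0$ forces $R\in\NANDconj$), deferring the ``mixed'' pin patterns to the equality simulation ``below''---but that argument is only set up for the case where $R_0$ and $R_1$ are incomparable, since it needs witnesses $\abar\in R_0\setminus R_1$ and $\bbar\in R_1\setminus R_0$. Concretely, suppose $R_0\subseteq R_1$, both nonempty, and $R_0$ pins some column $k$ to $0$ while column $k$ of $R_1$ is non-constant. Then the projection of $R$ onto columns $1$ and $k+1$ is $\{00,10,11\}$, which is neither monotone nor antitone; since a projection of a pseudo-monotone (resp.\ pseudo-antitone) relation onto non-constant columns is monotone (resp.\ antitone), Proposition~\ref{prop:OR-monotone} shows $R$ is in neither \ORconj{} nor \NANDconj{}. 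Yet your case analysis never reaches an equality simulation here. (The paper's Case~1.2 handles exactly this by observing that the projection is $\Rimp$ with reversed columns and invoking Lemma~\ref{lemma:proj-eq}; the symmetric problem occurs when $R_1\subseteq R_0$ and $R_1$ pins a column to $1$.) The correct statement is: $R_0\subseteq R_1$ gives $R\in\ORconj$ \emph{unless} $R_0$ has a zero-pinned column that is free in $R_1$, in which case $R$ 3-simulates equality.

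The second issue is that the incomparable case---which you yourself flag as the combinatorial heart---is a plan rather than a proof, and the paper's treatment shows it is not mere bookkeeping over coordinates where $\abar$ and $\bbar$ differ. The paper splits on whether either $R_i$ has positive width; in the key sub-case where $R_1$ has no constant columns (hence is monotone), it takes $\abar\in R_0\setminus R_1$ and a maximally agreeing tuple of $R_1$ and pins columns of $R$ so as to produce a sub-relation that is $0$- and $1$-valid but incomplete, then applies Lemma~\ref{lemma:valid-eq}; the width-zero sub-case needs a separate analysis of all binary projections, and it is there---not in the containment case---that $R$ can turn out to lie in \NANDconj{}. Your toolbox (Lemmas~\ref{lemma:proj-eq}, \ref{lemma:OR-NAND-eq} and~\ref{lemma:valid-eq}) is the right one, but without the explicit constructions one cannot verify that every sub-case actually yields one of the ppp-targets.
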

\begin{proof}
    We may assume, as before, that $R$ has no constant columns.  We
    may also assume that at least one of $R_0$ and $R_1$ has positive
    width: otherwise, the result is immediate from the previous lemma.
    We split the remaining work into two cases.

    \smallskip
    \noindent
    \emph{{\bf Case 1:} $R_0\subseteq R_1$.}
        Note that $R_1$ cannot have any constant columns in this case,
        since the same column would also have to be constant in $R_0$,
        giving a constant column in $R$.

        Suppose $R_i$ is defined by the normalized \ORconj{} formula
        $\phi_i$ in variables $x_2,\dots, x_{r+1}$.  Then $R$ is
        defined by the formula
        \begin{align}
            \phi_0 \vee (x_1=1 \wedge \phi_1)
            &\equiv (\phi_0 \vee x_1=1)
                \wedge (\phi_0 \vee \phi_1) \notag \\
            &\equiv (\phi_0 \vee x_1=1) \wedge \phi_1\,,
                \label{eq:OR-conj-formula}
        \end{align}
        where the first equivalence is the distribution law and the
        second is because $\phi_0$ implies $\phi_1$ (because
        $R_0\subseteq R_1$).  We consider the following two cases.

    \smallskip
    \noindent
    \emph{{\bf Case 1.1:} $R_0$ has no constant columns.}
        $\phi_0$ contains no pins and $x_1=1$ is equivalent to
        $\OR(x_1)$ so we can rewrite $\phi_0 \vee x_1=1$ in CNF.
        Therefore, (\ref{eq:OR-conj-formula}) defines an
        \ORconj{} relation.

    \smallskip
    \noindent
    \emph{{\bf Case 1.2:} $R_0$ has a constant column.}
        $R_1$ has no constant columns so, if $\proj{k} R_0 = \{0\}$
        for some $k$, then $\proj{1,k+1} R = \Rpmi$, and $R$
        3-simulates equality by Lemma~\ref{lemma:proj-eq}.  If every
        constant column of $R_0$ is all ones, then $\phi_0$ is in CNF
        since every pinning $x_i=1$ in $\phi_0$ can be written
        $\OR(x_i)$. We can therefore rewrite $\phi_0 \vee x_1=1$ in
        CNF, as in Case 1.1.

    \smallskip
    \noindent
    \emph{{\bf Case 2:} $R_0\nsubseteq R_1$.}  We will show that $R$
        3-simulates equality or is in \NANDconj{}. We consider two cases.

    \smallskip
    \noindent
    \emph{{\bf Case 2.1:} $R_1$ has a constant column,} say the $k$th.
        If the $k$th column of $R_0$ is also constant, it must have
        the opposite value (or $R$ would have a constant column).
        Therefore, $\proj{1,k+1} R$ is either $\Req$ or $\Rneq$, so
        $R$ 3-simulates equality by Lemma~\ref{lemma:proj-eq}.

        Otherwise, the $k$th column of $R_0$ is not constant, so
        $\proj{1,k+1} R$ is either $\Rimp$ or $\Rnand$.  In the first
        case, $R$ 3-simulates equality by Lemma~\ref{lemma:proj-eq}.
        In the second case, $\Ror$ is ppp-definable in at least one of
        $R_0$ and $R_1$ by Lemma~\ref{lemma:ORconj-OR} so $R$
        3-simulates equality by Lemma~\ref{lemma:OR-NAND-eq}.

    \smallskip
    \noindent
    \emph{{\bf Case 2.2:} $R_1$ has no constant columns.}
        By Proposition~\ref{prop:OR-monotone}, $R_1$ is monotone. Let
        $\abar\in R_0\setminus R_1$: by applying the same permutation
        to the columns of $R_0$ and $R_1$, we may assume that $\abar =
        0^\ell 1^{r-\ell}$.  We must have $\ell\geq 1$ as every
        non-empty $r$-ary monotone relation contains the tuple
        $1^r\!$.  Let $\bbar\in R_1$ be a tuple such that $a_i=b_i$
        for all $i$ in a maximal initial segment of $[1,r]$.  By monotonicity of
        $R_1$, we may assume that $\bbar = 0^k 1^{r-k}$.  Further, we
        must have $k<\ell$, since, otherwise, we would have
        $\bbar<\abar$, contradicting our choice of $\abar\notin R_1$.

        Now, consider the relation
        \begin{equation*}
            R' = \{a_0 a_1\dots a_{\ell-k}\mid
                            a_00^ka_1\dots a_{\ell-k}1^{r-\ell} \in R\}\,,
        \end{equation*}
        which is the result of pinning columns 2 to $(k+1)$ of $R$ to
        zero and columns $(r-\ell+1)$ to $(r+1)$ to one and discarding
        the resulting constant columns.  $R'$ contains $0^{\ell-k+1}$
        and $1^{\ell-k+1}$ but is not complete, as $10^{\ell-k}\notin
        R'\!$.  By Lemma~\ref{lemma:valid-eq}, $R'$ 3-simulates
        equality, so $R$ does, too.
\end{proof}

The following corollary follows from Proposition~\ref{prop:OR-NAND}
and the facts that $\bwcomp{R_0+R_1} = \bwcomp{R_1} + \bwcomp{R_0}$
and that, if $\bwcomp{R}$ 3-defines equality, then so does $R$, since
$\Req = \bwcomp{\Req}$.
 
\begin{corollary}
\label{cor:NAND-NAND}
    Let $R_0, R_1\in\NANDconj$ and let $R=R_0+R_1$. Then $R\in\ORconj$, $R\in\NANDconj$ or $R$ 3-simulates equality.
\end{corollary}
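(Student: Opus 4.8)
The plan is to reduce the statement to Lemma~\ref{lemma:OR-OR} by passing everything to bit-wise complements, exactly as announced just before the corollary. First I would verify the two elementary facts used. For the sum identity, unwinding the definition $R_0 + R_1 = \{0\abar \mid \abar \in R_0\} \cup \{1\abar \mid \abar \in R_1\}$ and complementing every coordinate sends a tuple $0\abar$ with $\abar \in R_0$ to $1\bwcomp{\abar}$ with $\bwcomp{\abar} \in \bwcomp{R_0}$, and a tuple $1\abar$ with $\abar \in R_1$ to $0\bwcomp{\abar}$ with $\bwcomp{\abar} \in \bwcomp{R_1}$; hence $\bwcomp{R_0 + R_1} = \bwcomp{R_1} + \bwcomp{R_0}$.

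Next I would check that $3$-simulation of equality transfers under complementation. Suppose $\bwcomp{R}$ $3$-simulates $\Reqk$ via an instance $I$ over $\{\bwcomp{R}\} \cup \GammaPin$ on variables $x_1, \dots, x_\ell$. Replacing each constraint $\bwcomp{R}(\vbar)$ by $R(\vbar)$ and interchanging $\Rzero$-pins with $\Rone$-pins yields an instance $I'$ over $\{R\} \cup \GammaPin$ for which $\sigma$ satisfies $I$ iff the coordinate-wise complement $\bwcomp{\sigma}$ satisfies $I'$. This bijection preserves all degrees, and it swaps the block of satisfying assignments with $x_1 = \dots = x_k = 0$ with the block with $x_1 = \dots = x_k = 1$; since the definition of simulation requires these two blocks to have the same size $m$, $I'$ witnesses that $R$ $3$-simulates $\Reqk$. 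As this holds for all $k \geq 2$, $R$ $3$-simulates equality.

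With these facts in hand the argument is short. Since $R_0, R_1 \in \NANDconj$, applying Proposition~\ref{prop:OR-NAND} to $\bwcomp{R_0}$ and $\bwcomp{R_1}$ (and using $\bwcomp{\bwcomp{S}} = S$) gives $\bwcomp{R_0}, \bwcomp{R_1} \in \ORconj$. By the sum identity, $\bwcomp{R} = \bwcomp{R_1} + \bwcomp{R_0}$, so Lemma~\ref{lemma:OR-OR} yields three cases: $\bwcomp{R} \in \ORconj$, $\bwcomp{R} \in \NANDconj$, or $\bwcomp{R}$ $3$-simulates equality. In the first, Proposition~\ref{prop:OR-NAND} gives $R \in \NANDconj$; in the second it gives $R \in \ORconj$; in the third, the transfer fact gives that $R$ $3$-simulates equality. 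These are exactly the three alternatives in the statement.

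There is no genuine obstacle here, as everything is bookkeeping on top of already-proved results; the only points needing care are keeping the double bit-wise complement straight when invoking Proposition~\ref{prop:OR-NAND}, and observing that the equality simulation really does survive complementation, which works precisely because $\Reqk$ is its own bit-wise complement and the definition of simulation is symmetric in the two constant blocks.
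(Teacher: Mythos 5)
Your proposal is correct and follows exactly the route the paper indicates in the sentence preceding the corollary: complement everything, apply Lemma~\ref{lemma:OR-OR} via Proposition~\ref{prop:OR-NAND}, the identity $\bwcomp{R_0+R_1}=\bwcomp{R_1}+\bwcomp{R_0}$, and the fact that $3$-simulation of equality survives bit-wise complementation. You have merely filled in the bookkeeping that the paper leaves implicit, and all the details check out.
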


\begin{theorem}
\label{thrm:trichotomy}
    Every Boolean relation is in \ORconj{}, is in \NANDconj{} or
    3-simulates equality.
\end{theorem}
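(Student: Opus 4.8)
I would argue by induction on the arity $r$ of $R$, peeling off the first coordinate: writing $R_c=\{a_2\cdots a_r\mid c\,a_2\cdots a_r\in R\}$ for $c\in\Bool$, we have $R=R_0+R_1$ with $R_0,R_1$ of arity $r-1$. The base case $r=1$ is immediate, since each of the four unary relations is a conjunction of pins ($\emptyset$ being defined by two contradictory pins and $\Bool^1$ by the empty conjunction), hence lies in $\ORconj$ (and in $\NANDconj$).

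For the inductive step, the first move is to deal with the case in which one of the two sections already $3$-simulates equality. Note that $R_0\pppleq R$ (pin column~$1$ of $R$ to $0$, then project that column away) and likewise $R_1\pppleq R$. If, say, $R_0$ $3$-simulates equality, then in any equality gadget witnessing this I replace every constraint $R_0(v_1,\dots,v_{r-1})$ by $\Rzero(z)\wedge R(z,v_1,\dots,v_{r-1})$ with $z$ a fresh variable (and symmetrically, using $\Rone$, if it is $R_1$ that simulates equality). The fresh variable has degree~$2$, no other variable's degree changes, and the number of satisfying assignments is preserved exactly, so $R$ $3$-simulates equality as well.

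So I may assume that neither section $3$-simulates equality; the inductive hypothesis then gives $R_0,R_1\in\ORconj\cup\NANDconj$. If both lie in $\ORconj$, apply Lemma~\ref{lemma:OR-OR}; if both lie in $\NANDconj$, apply Corollary~\ref{cor:NAND-NAND}; either way the required conclusion follows. A short case analysis shows the only remaining configuration is that one section, say $R_0$, lies in $\ORconj\setminus\NANDconj$ and the other, $R_1$, in $\NANDconj\setminus\ORconj$ (indeed, if either section were in $\ORconj\cap\NANDconj$, or if both were in the same one of the two classes, we would already be in a handled case). Here the normalized $\ORconj$ formula of $R_0$ cannot consist of pins alone --- a pure pin-conjunction is also a $\NANDconj$ formula, contradicting $R_0\notin\NANDconj$ --- so $\width{R_0}\geq 2$ and Lemma~\ref{lemma:ORconj-OR} gives $\Ror\pppleq R_0\pppleq R$; symmetrically $\Rnand\pppleq R_1\pppleq R$. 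Lemma~\ref{lemma:OR-NAND-eq}, applied with both of its relations taken to be $R$, then shows that $R$ $3$-simulates equality, completing the induction.

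The substantial work has already been carried out in Lemma~\ref{lemma:OR-OR}, Corollary~\ref{cor:NAND-NAND} and the ppp-definability lemmas, so the only genuinely new point is the \emph{mixed} subcase just described; the main things to be careful about are checking that this really is the sole leftover configuration and that the degree bookkeeping in the ``one section simulates equality'' step never pushes the constructed instance past degree~$3$. Both are routine.
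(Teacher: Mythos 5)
Your proposal is correct and follows essentially the same route as the paper: decompose $R=R_0+R_1$, induct on arity, and dispatch the three configurations via Lemma~\ref{lemma:OR-OR}, Corollary~\ref{cor:NAND-NAND} and Lemma~\ref{lemma:OR-NAND-eq}. The only differences are cosmetic --- you start the induction at $r=1$ where the paper handles all binary relations explicitly as a base case, and you spell out the pin-gadget and the width-$\geq 2$ argument that the paper leaves implicit.
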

\begin{proof}
    Let $R$ be a Boolean relation.  We proceed by induction on its
    arity, $r$.  If $r\leq 2$, then, if $R$ is
    neither \ORconj{} nor \NANDconj{} then it can only be $\Req$,
    $\Rneq$, $\Rimp$ or $\Rpmi$; all of these 3-simulate equality by
    Lemma~\ref{lemma:proj-eq}.

    Now let $R$ have arity $r+1>2$ and
    let $R_0$ and $R_1$ be such that $R = R_0+R_1$.  By the inductive
    hypothesis, each of $R_0$ and $R_1$ is in \ORconj{}, in
    \NANDconj{} or 3-simulates equality.

    If either of $R_0$ and $R_1$ 3-simulates equality, then so does
    $R$.  Otherwise, either both are in \ORconj{}, both are in
    \NANDconj{} or one is in \ORconj{} and the other in \NANDconj{}.
    In the first two cases, $R$ is in \ORconj{} or in \NANDconj{} or
    3-simulates equality by Lemma~\ref{lemma:OR-OR} or
    Corollary~\ref{cor:NAND-NAND}.  In the third case, if $R_0$ and
    $R_1$ have positive width, then $R$ 3-simulates equality by
    Lemma~\ref{lemma:OR-NAND-eq}; otherwise, we are in one of the
    first two cases.
\end{proof}


\section{Complexity}
\label{sec:Complexity}

The complexity of approximating $\numCSP(\Gamma)$ where the degree of
instances is unbounded is given by Dyer, Goldberg and Jerrum
\cite[Theorem~3]{DGJ2010:Bool-approx}.

\begin{theorem}
\label{thrm:unbounded}
    Let $\Gamma$ be a Boolean constraint language.
    \begin{itemize}
    \itemspacing
    \item If every $R\in\Gamma$ is affine, then $\numCSP(\Gamma)\in
        \FPtime$.
    \item Otherwise, if $\Gamma\subseteq\IMconj$, then $\numCSP(\Gamma)
        \APequiv \numBIS$.
    \item Otherwise, $\numCSP(\Gamma) \APequiv \numSAT$.
    \end{itemize}
\end{theorem}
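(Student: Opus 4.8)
The plan is to establish the three regimes separately: direct polynomial-time computation in the affine case, and, in each of the other two, a matching upper-bound reduction paired with a hardness reduction. The affine case is direct: replacing every constraint of an instance $I=(V,C)$ by its defining system of linear equations over $\GFtwo{}$ turns $I$ into a single linear system over $V$, whose solution set is empty or an affine subspace of some dimension $k\leq|V|$; Gaussian elimination detects inconsistency and computes $k$ in polynomial time, so $Z(I)\in\{0,2^{k}\}$ and $\numCSP(\Gamma)\in\FPtime$. For the upper bounds, note first that every $\numCSP(\Gamma)$ lies in $\numP$ and that $\numSAT$ is $\numP$-complete under AP-reductions \cite{DGGJ2004:Approx}, so $\numCSP(\Gamma)\APred\numSAT$ holds with no hypothesis, covering the third case. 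When $\Gamma\subseteq\IMconj$, I would use Lemma~\ref{lemma:IMconj-norm} to rewrite each constraint as its normalized conjunction of pins and binary implications (preserving $Z$ exactly) and then unit-propagate the pins, either detecting $Z=0$ or obtaining an equivalent instance of $\numCSP(\{\Rimp\})$; since such an instance is a digraph whose satisfying assignments are exactly the up-sets of its reachability preorder, $\numCSP(\{\Rimp\})$ is the problem of counting down-sets of a partial order, which is AP-interreducible with $\numBIS$ \cite{DGGJ2004:Approx}. Hence $\numCSP(\Gamma)\APred\numBIS$.

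For the hardness direction of the $\numBIS$ case I would fix a non-affine $R\in\Gamma$, use Lemma~\ref{lemma:IMconj-implies} to obtain $\Rimp\pppleq R$, and convert this ppp-definition into an AP-reduction $\numCSP(\{\Rimp\})\APred\numCSP(\Gamma)$; together with $\numCSP(\{\Rimp\})\APequiv\numBIS$ this gives $\numCSP(\Gamma)\APequiv\numBIS$. For the hardness direction of the $\numSAT$ case I would fix both a non-affine $R_1\in\Gamma$ and some $R_2\in\Gamma\setminus\IMconj$, and reduce from $\numSAT$ using the fact that $\numCSP(\{\Rnand\})$ is exactly the problem of counting independent sets in a graph, which is AP-interreducible with $\numSAT$ \cite{DGGJ2004:Approx}, as is $\numCSP(\{\Ror\})$ (obtained by complementing every variable). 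So it suffices to simulate a binary $\NAND$ or $\OR$ constraint inside $\numCSP(\Gamma)$, and Theorem~\ref{thrm:trichotomy} organizes the case analysis: if $R_2\in\ORconj$ then it has width at least~$2$ ($\ORconj$ relations of width at most $1$ are conjunctions of pins, hence lie in $\IMconj$), so Lemma~\ref{lemma:ORconj-OR} gives $\Ror\pppleq R_2$; dually if $R_2\in\NANDconj$; and if $R_2$ instead falls only into the ``$3$-simulates equality'' class of Theorem~\ref{thrm:trichotomy}, I would combine the structure of $R_2$ with that of the non-affine $R_1$, along the lines of the constructions in Section~\ref{sec:SimEq}.

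The step I expect to be the main obstacle, in both hardness directions, is that $\pppleq$ only becomes an AP-reduction between the associated $\numCSP$ problems once the ``pinning'' operations it uses can themselves be implemented by constraints from $\Gamma$, and $\Gamma$ need not contain $\Rzero$ or $\Rone$. The genuinely difficult case is that of relations that are simultaneously $0$-valid and $1$-valid: over such a $\Gamma$ no gadget can force a variable to a constant, since the all-zero and all-one assignments always survive. Handling this --- by dedicated gadgets, by first proving the instance unsatisfiable, or by a separate direct reduction --- is where I expect most of the technical work to go, and is exactly the portion of \cite{DGJ2007:Bool-approx} from which this theorem is quoted.
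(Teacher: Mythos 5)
First, a point of orientation: the paper does not prove this theorem at all --- it is imported verbatim from \cite{DGJ2007:Bool-approx} --- so there is no in-paper argument to compare against and you are in effect reconstructing an external proof. Your tractable case and both ``easiness'' directions are correct and are essentially the standard ones: Gaussian elimination over $\GFtwo$ for the affine case; membership of $\numCSP(\Gamma)$ in $\numP$ together with $\numP$-completeness of $\numSAT$ under AP-reductions for the upper bound of the third bullet; and normalization, unit propagation of pins, and the identification of $\numCSP(\{\Rimp\})$ with counting down-sets of a partial order (AP-interreducible with $\numBIS$ by \cite{DGGJ2004:Approx}) for the upper bound of the second bullet.

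Both hardness directions, however, contain genuine gaps. The first is the one you name yourself: every route you propose passes through a ppp-definition (Lemma~\ref{lemma:IMconj-implies} or Lemma~\ref{lemma:ORconj-OR}), and ppp-definitions pin columns, yet the theorem concerns $\numCSP(\Gamma)$ with no $\GammaPin$ in the language --- unlike the rest of this paper, which always works with $\Gamma\cup\GammaPin$. Announcing that this is ``where most of the technical work will go'' does not discharge it; the pinning lemma is the technical heart of the cited proof and cannot be recovered from anything established in this paper. The second gap is in the $\numSAT$-hardness case analysis. Theorem~\ref{thrm:trichotomy} is the wrong organizing tool there: ``3-simulates equality'' is a degree-reduction property, not a hardness property (the affine relations $\Req$ and $\Rneq$ both 3-simulate equality), and in the unbounded-degree setting equality constraints are free anyway, since variables can simply be identified. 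Concretely, take $\Gamma=\{\Rimp,\Rneq\}$: here $R_1=\Rimp$ is non-affine, $R_2=\Rneq\notin\IMconj$ is affine and lies in neither $\ORconj$ nor $\NANDconj$, so the theorem's third bullet applies but your case analysis lands in the branch where you only gesture at ``combining the structure of $R_2$ with that of $R_1$ along the lines of Section~\ref{sec:SimEq}'' --- and the constructions of that section produce only equality simulations, which yield nothing here. What is actually required is to implement $\Ror$ or $\Rnand$ (together with pins) jointly from $R_1$ and $R_2$, a Post-lattice-style implementation argument that your sketch does not supply.
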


Towards our classification of the approximation complexity of
bounded-degree $\numCSP(\Gamma)$, we first deal with sub-cases.
Recall that $\numBIS$ is the problem of counting independent sets in
bipartite graphs and $\numwHISd$ is that of counting independent sets
in hypergraphs where every vertex has degree at most $d$ and every
hyper-edge contains at most $w$ vertices.

\begin{proposition}
\label{prop:im-bis}
    If $\Gamma \subseteq \IMconj$ contains at least one non-affine
    relation then, for all $d\geq 3$, $\numCSPd(\Gamma\cup \GammaPin)
    \APequiv \numBIS$.
\end{proposition}
\begin{proof}
    It is immediate from \cite[Lemma~9]{DGJ2010:Bool-approx} that $\numCSPd(\Gamma\cup \GammaPin) \APredto \numBIS$.

    For the converse, first observe that, by
    \cite[Lemma~8]{DGJ2010:Bool-approx}, $\numBIS\APredto
    \numCSP(\{\Rimp\})$ and, since $\Rimp$ 3-simulates equality by
    Lemma~\ref{lemma:proj-eq}, we have $\numCSP(\{\Rimp\}) \APredto
    \numCSPd(\{\Rimp\}\cup \GammaPin)$ for all $d\geq 3$ by
    Proposition~\ref{prop:bound-unbound}.  We must show that
    $\numCSPd(\{\Rimp\}\cup \GammaPin)\APredto \numCSPd(\Gamma\cup \GammaPin)$.

    To this end, let $R$ be any non-affine relation in $\Gamma$.  By
    Lemma~\ref{lemma:IMconj-implies}, $\Rimp\pppleq R$ and the
    ppp~definition involves projecting only pinned columns.
    Therefore, we can express the constraint $\Rimp(x,y)$ by a
    constraint of the form $R(v_1, \dots, v_r)$, where, for some $i$
    and $j$, $v_i = x$ and $v_j = y$ and the other variables are
    pinned to zero or one.
\end{proof}

\begin{lemma}
\label{lemma:OR-HIS}
    For $d\geq 2$ and $w\geq 2$,
    \begin{equation*}
        \numwHISd{} \APequiv \numCSPd(\{\Rork[w]\}\cup \GammaPin)
            \APequiv \numCSPd(\{\Rnandk[w]\}\cup \GammaPin).
    \end{equation*}
\end{lemma}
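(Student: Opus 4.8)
The plan is to establish the two claimed $\APequiv$'s separately. The equivalence $\numCSPd(\{\Rork[w]\}\cup\GammaPin) \APequiv \numCSPd(\{\Rnandk[w]\}\cup\GammaPin)$ is a matter of symmetry: $\bwcomp{\Rork[w]} = \Rnandk[w]$ and $\bwcomp{\Rzero} = \Rone$, so complementing every variable of an instance over $\{\Rork[w]\}\cup\GammaPin$ (that is, replacing an assignment $\sigma$ by $v\mapsto 1-\sigma(v)$) is a bijection onto the satisfying assignments of the instance over $\{\Rnandk[w]\}\cup\GammaPin$ obtained by turning each $\Rzero$-pin into an $\Rone$-pin and vice versa. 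This transformation fixes the degree, so it gives parsimonious reductions in both directions. It therefore suffices to show $\numwHISd \APequiv \numCSPd(\{\Rnandk[w]\}\cup\GammaPin)$.

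For $\numwHISd \APredto \numCSPd(\{\Rnandk[w]\}\cup\GammaPin)$, given a width-$w$ hypergraph $H=(V,E)$ of degree at most $d$, the plan is to build the instance with variable set $V$ and, for each hyper-edge $e=\{v_1,\dots,v_k\}$ (with $k\leq w$), the constraint $\Rnandk[w](v_1,\dots,v_k,1^{w-k})$, using $w-k$ fresh variables pinned to $1$ for each edge. Since $\Rnandk[w](v_1,\dots,v_k,1^{w-k})$ holds iff $(v_1,\dots,v_k)\neq 1^k$, a satisfying assignment $\sigma$ is exactly one for which $\sigma^{-1}(1)\cap V$ is an independent set of $H$; the fresh variables are forced, so this is a bijection and $Z(I)=\numwHIS(H)$. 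Each original vertex occurs in $d(v)\leq d$ constraints and each fresh variable occurs twice, so since $d\geq 2$ the instance has degree at most $d$, and we have a parsimonious reduction.

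For the converse $\numCSPd(\{\Rnandk[w]\}\cup\GammaPin)\APredto \numwHISd$, the plan is to simplify a given degree-$\leq d$ instance $I$ to a hypergraph. First collapse repeated variables within each $\Rnandk[w]$-scope, using that $\Rnandk[w]$ is symmetric and that repeating an argument of a NAND does not change it; each remaining constraint is then $\NAND(v_1,\dots,v_j)$ for a set of $j\leq w$ distinct variables. Then eliminate pins: delete every variable pinned to $0$ together with the constraints in which it appears (those are automatically satisfied); delete every variable pinned to $1$ from the scopes of all constraints; treat a NAND on a single variable as a new pin to $0$; and iterate to a fixed point. If some variable is pinned both to $0$ and to $1$, or some NAND-scope becomes empty, then $Z(I)=0$ and the reduction answers directly. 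After removing duplicate constraints, let $H$ have as vertices the variables still occurring in a NAND constraint and as hyper-edges the scopes of those constraints; then $H$ has width at most $w$ and, because simplification only deletes vertices, edges and occurrences, degree at most $d$. Writing $t$ for the number of variables of $I$ that become either forced or unconstrained during this process (computable in polynomial time), we have $Z(I)=2^t\,\numwHIS(H)$, and one oracle call followed by multiplication by $2^t$ gives the AP-reduction.

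The only non-routine part is the bookkeeping in this last direction: checking that collapsing repeated arguments, eliminating $0$- and $1$-pins, and detecting the degenerate cases (contradictory pins, empty NAND-scope) interact correctly, and that neither the width bound $w$ nor the degree bound $d$ is increased. The two forward directions are parsimonious translations and the OR/NAND equivalence is pure complementation, so those present no difficulty.
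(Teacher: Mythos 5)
Your proposal follows essentially the same route as the paper: the OR/NAND equivalence by bitwise complementation, a parsimonious encoding of hyper-edges as width-$w$ constraints padded with pinned fresh variables, and a reverse translation that absorbs pins by simplifying scopes. (You work with $\Rnandk[w]$ where the paper works with $\Rork[w]$, which is immaterial given the first equivalence, and you are in fact more careful than the paper about the degenerate cases --- contradictory pins and scopes that empty out, which make $Z(I)=0$.)

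There is, however, one slip in the final bookkeeping of the reverse direction. You set $t$ to be the number of variables that become \emph{either forced or unconstrained} and claim $Z(I)=2^t\,\numwHISd(H)$. A forced variable (one pinned, directly or after propagation, to $0$ or to $1$) takes exactly one value in every satisfying assignment and so contributes a factor of $1$, not $2$; only the variables that end up unconstrained and unpinned contribute a factor of $2$. As written, your reduction returns an answer too large by a factor of $2^{|F|}$, where $F$ is the set of forced variables, which is not within any approximation tolerance. The fix is immediate: let $t$ count only the free (unconstrained, unpinned) variables. An alternative, which is what the paper does, is to keep every unpinned variable as a vertex of $H$ (isolated vertices lie in or out of an independent set freely), so that no correction factor is needed at all.
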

\begin{proof}
    The second equivalence is trivial, since $\Rork[w]$ and
    $\Rnandk[w]$ are bit-wise-complements.

    For the first equivalence, let $H$ be an instance of \numwHISd{}. We create an instance of $\numCSPd(\{\Rork[w]\}\cup \GammaPin)$ as follows.  The variables are $\{x_v\mid v\in V(H)\}$ and, for each hyper-edge $\{v_1, \dots, v_s\}$, there is a constraint $\Rork[w](x_{v_1}, \dots, x_{v_s}, 0, \dots, 0)$.  Each vertex appears in at most $d$ hyper-edges so each variable appears in at most $d$ constraints. It is easy to see that a configuration $\sigma$ of the resulting $\numCSPd(\{\Rork[w]\}\cup \GammaPin)$ instance is satisfying if, and only if, $\{v \mid \sigma(x_v) = 0\}$ is an independent set in $H$.

    Conversely, if we are given an instance of $\numCSPd(\{\Rork[w]\}\cup \GammaPin)$, we create an instance $H$ of \numwHISd{} as follows.  There is a vertex $v_x$ for every variable $x$.  For every constraint $\Rork[w](x_1, \dots, x_w)$ (where the $x_i$ are not necessarily distinct), add the hyper-edge $\{v_{x_1}, \dots, v_{x_w}\}$.  Now, for every constraint $\Rzero(x)$, delete the vertex $v_x$ and remove it from every hyper-edge that contains it.  For every constraint $\Rone(x)$, delete $v_x$ and delete every hyper-edge that contains it.  It is easy to see that a configuration $\sigma$ is satisfying if, and only if, it satisfies the pins and the set $\{v_x\mid \sigma(x) = 0\}\cap V(H)$ is independent in $H$.
\end{proof}

In the following two propositions, we just prove the \ORconj{} cases;
the \NANDconj{} cases are equivalent.

\begin{proposition}
\label{prop:HIS-to-ORconj}
    Let $R$ be an \ORconj{} or \NANDconj{} relation of width~$w>0$.
    Then, for $d\geq 2$, $\numwHISd{} \APredto \numCSPd(\{R\}\cup
    \GammaPin)$.
\end{proposition}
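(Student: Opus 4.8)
The plan is to chain two facts already available. By Lemma~\ref{lemma:OR-HIS} we have
$\numwHISd \APequiv \numCSPd(\{\Rork[w]\}\cup\GammaPin) \APequiv \numCSPd(\{\Rnandk[w]\}\cup\GammaPin)$,
so it suffices to reduce $\numCSPd(\{\Rork[w]\}\cup\GammaPin)$ to $\numCSPd(\{R\}\cup\GammaPin)$ when $R\in\ORconj$ has width $w$ (and $\numCSPd(\{\Rnandk[w]\}\cup\GammaPin)$ to $\numCSPd(\{R\}\cup\GammaPin)$ when $R\in\NANDconj$). As stated, I treat only the \ORconj{} case; the \NANDconj{} case is identical after bit-wise complementation, using Proposition~\ref{prop:OR-NAND} and $\Rnandk[w]=\bwcomp{\Rork[w]}$.

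The engine is Lemma~\ref{lemma:ORconj-OR}, which gives $\Rork[w]\pppleq R$, but I need the slightly sharper form visible from its proof: the ppp-definition of $\Rork[w]$ from $R$ can be realized by a column permutation, a round of pinning, and then projecting away \emph{only the columns that were just pinned}. Concretely, fix the normalized formula $\phi$ for $R$ and a widest clause $\OR(x_{i_1},\dots,x_{i_w})$ of $\phi$. For every column $\ell\notin\{i_1,\dots,i_w\}$: if $x_\ell$ is pinned in $\phi$, pin it to that value; otherwise pin it to $1$. By normalization, every clause of $\phi$ other than $\OR(x_{i_1},\dots,x_{i_w})$ has an argument outside $\{x_{i_1},\dots,x_{i_w}\}$ (else its argument set would be a subset of the widest clause's), and such an argument is not pinned in $\phi$, hence is pinned to $1$ by us, so the clause is satisfied; all pins of $\phi$ are satisfied by construction. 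Thus after pinning, the relation is $\Rork[w]$ on the columns $i_1,\dots,i_w$ with all other columns constant, and deleting those constant columns gives exactly $\Rork[w]$.

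With this in hand the reduction is immediate and count-preserving. Given an instance $I$ of $\numCSPd(\{\Rork[w]\}\cup\GammaPin)$, form $I'$ by replacing each constraint $\Rork[w](y_1,\dots,y_w)$ with the constraint $R(v_1,\dots,v_r)$, where $v_{i_j}=y_j$ for $1\leq j\leq w$ and each remaining $v_\ell$ is a fresh variable, together with a pin ($\Rzero$ or $\Rone$) fixing $v_\ell$ to the constant prescribed above; all pins already in $I$ are kept unchanged. Each fresh variable occurs exactly twice in $I'$ (once in an $R$-constraint, once in a pin), so the degree of $I'$ is $\max(d,2)=d$ since $d\geq 2$, while every original variable occurs in $I'$ exactly as often as in $I$. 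Because the fresh variables are pinned, restriction and (forced) extension give mutually inverse bijections between the satisfying assignments of $I$ and those of $I'$, so $Z(I')=Z(I)$; this is trivially an AP-reduction, completing the chain $\numwHISd \APredto \numCSPd(\{R\}\cup\GammaPin)$.

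The only step needing genuine care is the sharpening of Lemma~\ref{lemma:ORconj-OR}: that $\Rork[w]$ is ppp-defined from $R$ by pinning auxiliary columns alone, with no projection of a free column. This is exactly what guarantees that the substitution does not raise the multiplicity of $y_1,\dots,y_w$ and so keeps $I'$ within degree $d$; without it there would be no control on the degree. Everything else is routine bookkeeping.
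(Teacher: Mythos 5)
Your proposal is correct and follows essentially the same route as the paper's proof: invoke Lemma~\ref{lemma:OR-HIS}, and use the fact that the ppp-definition of $\Rork[w]$ from $R$ in Lemma~\ref{lemma:ORconj-OR} pins and then projects away only the pinned columns, so an $\Rork[w]$-constraint becomes an $R$-constraint with constants in its scope, which the paper's earlier convention (fresh doubly-occurring pinned variables) keeps within degree $d$ for $d\geq 2$. Your explicit re-derivation of that sharpened form of Lemma~\ref{lemma:ORconj-OR} (pinning already-pinned columns to their prescribed values rather than blindly to $1$) is a careful touch the paper leaves implicit, but it is the same argument.
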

\begin{proof}
    By Lemma~\ref{lemma:ORconj-OR}, $\Rork[w]\pppleq R$ and the ppp~definition involves pinning and then projecting away all but $w$
    of the columns.  Thus, an $\Rork[w]$-constraint can be simulated by
    an $R$-constraint in which some elements of the scope are
    constants.  The result follows from Lemma~\ref{lemma:OR-HIS}.
\end{proof}

We define the \emph{variable rank} of an \ORconj{} or \NANDconj{}
relation $R$ to be $\vrank{R}$, the greatest number of times that any
variable appears in the (unique) normalized formula that defines $R$.
We similarly define the variable rank of an \ORconj{} or \NANDconj{}
constraint language to be the maximum variable rank of the relations
within it.

\begin{proposition}
\label{prop:ORconj-to-HIS}
    Let $R$ be an \ORconj{} or \NANDconj{} relation of width~$w>0$ and
    variable rank~$k$.  Then, for $d\geq 2$, $\numCSPd(\{R\}\cup
    \GammaPin) \APredto \numwHISd[kd]$.
\end{proposition}
\begin{proof}
    Given an instance $I$ of $\numCSPd(\{R\}\cup \GammaPin)$, we
    produce an instance $I'$ of the problem
    $\numCSP(\{\Rork[2],\dots,\Rork[w]\}\cup \GammaPin)$ with the same
    variables by replacing every $R$-constraint with the
    $\Rork[i]$-constraints and pins corresponding to the normalized
    formula that defines $R$. Clearly, $Z(I) = Z(I')$ but a variable
    that appeared $d$ times in $I$ appears up to $kd$ times in
    $I'\!$, so we have established that
    \begin{align*}
        \numCSPd(\{R\}\cup \GammaPin)
            &\APredto \numCSPd[kd](\{\Rork[2],\dots,\Rork[w]\}
                                                      \cup \GammaPin) \\
            &\APredto \numCSPd[kd](\{\Rork[w]\}\cup \GammaPin)\,,
    \end{align*}
    where the last reduction holds because, for any $s<w$, the constraint
    $\Rork[s](x_1, \dots, x_s)$ is equivalent to $\Rork[w](x_1, \dots,
    x_s, 0, \dots, 0)$. By Lemma~\ref{lemma:OR-HIS},
    $\numCSPd[kd](\{\Rork[w]\}\cup \GammaPin) \APequiv
    \numwHISd[kd]$.
\end{proof}

We now give the complexity of approximating $\numCSPd(\Gamma\cup \GammaPin)$ for $d\geq 3$.

\begin{theorem}
\label{theorem:complexity}
    Let $\Gamma$ be a Boolean constraint language and let $d\geq 3$.
    \begin{itemize}
    \itemspacing
    \item If every $R\in\Gamma$ is affine, then $\numCSPd(\Gamma\cup \GammaPin) \in
        \FPtime$.
    \item Otherwise, if $\Gamma\subseteq \IMconj$, then
        $\numCSPd(\Gamma\cup \GammaPin) \APequiv \numBIS$.
    \item Otherwise, if $\Gamma\subseteq \ORconj$ or $\Gamma\subseteq
        \NANDconj$, then $\numwHISd \APredto \numCSPd(\Gamma\cup
        \GammaPin) \APredto \numwHISd[kd]$, where $w=\width{\Gamma}$
        and $k=\vrank{\Gamma}$.
    \item Otherwise, $\numCSPd(\Gamma\cup \GammaPin) \APequiv \numSAT$.
    \end{itemize}
\end{theorem}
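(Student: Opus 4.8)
The plan is to dispatch the four cases of the statement in order, leaning on the unbounded-degree classification of~\cite{DGJ2007:Bool-approx} (Theorem~\ref{thrm:unbounded}), the structural trichotomy of Section~\ref{sec:Trichotomy}, and the equality-simulation results of Section~\ref{sec:SimEq}. The first two cases are immediate. Since $\Rzero$ and $\Rone$ are themselves affine, if every relation of $\Gamma$ is affine then so is every relation of $\Gamma\cup\GammaPin$, so $\numCSP(\Gamma\cup\GammaPin)\in\FPtime$ by the first bullet of Theorem~\ref{thrm:unbounded}, and restricting the degree only removes instances. If $\Gamma$ is not entirely affine but $\Gamma\subseteq\IMconj$, then $\Gamma$ contains a non-affine relation and Proposition~\ref{lemma:im-bis} (whose hypothesis $d\geq 3$ is ours) gives $\numCSPd(\Gamma\cup\GammaPin)\APequiv\numBIS$ directly.

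For the third case, suppose $\Gamma$ is not entirely affine, $\Gamma\not\subseteq\IMconj$, and $\Gamma\subseteq\ORconj$ (the case $\Gamma\subseteq\NANDconj$ is symmetric, and Propositions~\ref{prop:HIS-to-ORconj} and~\ref{prop:ORconj-to-HIS} are in any case stated for both). I would first record that an \ORconj{} relation is affine exactly when it has width $0$: a width-$0$ relation is a conjunction of pins, hence affine, while if $\width{R}\geq 2$ then $\Ror\pppleq R$ by Lemma~\ref{lemma:ORconj-OR}, and $\Ror$ is not affine (a nonempty affine relation has cardinality a power of two, but $|\Ror|=3$), so $R$ is not affine by Lemma~\ref{lemma:affine-props}. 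Hence $\Gamma$ contains a relation $R$ of width exactly $w\geq 2$, and the lower bound $\numwHISd\APredto\numCSPd(\{R\}\cup\GammaPin)\APredto\numCSPd(\Gamma\cup\GammaPin)$ follows from Proposition~\ref{prop:HIS-to-ORconj}. For the upper bound I would run the construction of Proposition~\ref{prop:ORconj-to-HIS} relation by relation: replacing each $R$-constraint of the instance ($R\in\Gamma$) by the \OR{}-clauses and pins of the normalized formula of $R$ preserves the number of satisfying assignments and multiplies the degree by at most $k$, giving an instance over $\{\Rork[2],\dots,\Rork[w]\}\cup\GammaPin$; replacing each $\Rork[s]$-constraint with $s<w$ by an $\Rork[w]$-constraint in which $w-s$ arguments are the constant $0$, and then applying Lemma~\ref{lemma:OR-HIS}, yields $\numCSPd(\Gamma\cup\GammaPin)\APredto\numCSPd[kd](\{\Rork[w]\}\cup\GammaPin)\APequiv\numwHISd[kd]$.

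For the final case, $\Gamma$ is not entirely affine, $\Gamma\not\subseteq\IMconj$, $\Gamma\not\subseteq\ORconj$ and $\Gamma\not\subseteq\NANDconj$. The upper bound is immediate: $\numCSPd(\Gamma\cup\GammaPin)\in\numP$ and $\numSAT$ is $\numP$-complete under AP-reductions~\cite{DGGJ2004:Approx}, so $\numCSPd(\Gamma\cup\GammaPin)\APredto\numSAT$. For the matching lower bound, the crucial step is to show that $\Gamma$ $d$-simulates equality. Pick $R_1\in\Gamma\setminus\ORconj$ and $R_2\in\Gamma\setminus\NANDconj$ (possibly $R_1=R_2$). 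By Theorem~\ref{thrm:trichotomy}, $R_1\in\NANDconj$ or $R_1$ 3-simulates equality, and $R_2\in\ORconj$ or $R_2$ 3-simulates equality; if either 3-simulates equality we are done, since then a subset of $\Gamma$ does. Otherwise $R_1\in\NANDconj\setminus\ORconj$ and $R_2\in\ORconj\setminus\NANDconj$, so both have width $\geq 2$ (a width-$0$ relation, being a conjunction of pins, lies in both \ORconj{} and \NANDconj{}), whence $\Rnand\pppleq R_1$ and $\Ror\pppleq R_2$ by Lemma~\ref{lemma:ORconj-OR} and $\{R_1,R_2\}\subseteq\Gamma$ 3-simulates equality by Lemma~\ref{lemma:OR-NAND-eq}. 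As $d\geq 3$, a 3-simulation of equality is a $d$-simulation, so $\Gamma$ $d$-simulates equality, and Proposition~\ref{prop:bound-unbound} gives $\numCSP(\Gamma)\APredto\numCSPd(\Gamma\cup\GammaPin)$. Finally $\numSAT\APequiv\numCSP(\Gamma)$ by the third bullet of Theorem~\ref{thrm:unbounded} (as $\Gamma$ is non-affine and $\Gamma\not\subseteq\IMconj$), and composing the reductions proves $\numSAT\APredto\numCSPd(\Gamma\cup\GammaPin)$.

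The main obstacle is the lower bound of this final case: one has to extract a $d$-simulation of equality from nothing more than the failure of $\Gamma$ to lie inside \ORconj{} or inside \NANDconj{}. That is precisely where Theorem~\ref{thrm:trichotomy} and, especially, the joint simulation of Lemma~\ref{lemma:OR-NAND-eq} earn their keep, namely to cover the subcase in which no single relation of $\Gamma$ simulates equality but an \ORconj{} witness and a \NANDconj{} witness together do. By comparison, the generalization of Proposition~\ref{prop:ORconj-to-HIS} from one relation to a finite set is routine bookkeeping; the only point requiring care is that the degree blow-up is governed by the parameter $k$ named in the statement.
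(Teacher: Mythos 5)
Your proof is correct and follows essentially the same route as the paper's: the same four-way case split, with Proposition~\ref{lemma:im-bis} for the \IMconj{} case, Propositions~\ref{prop:HIS-to-ORconj} and~\ref{prop:ORconj-to-HIS} for the hypergraph case, and Theorem~\ref{thrm:trichotomy} plus Lemma~\ref{lemma:OR-NAND-eq} feeding Proposition~\ref{prop:bound-unbound} in the \numSAT{} case. The only differences are cosmetic: you spell out the routine extension of Proposition~\ref{prop:ORconj-to-HIS} from a single relation to all of $\Gamma$ and verify $w\geq 2$ explicitly, and you obtain the \numSAT{} lower bound via $\numCSP(\Gamma)$ rather than $\numCSP(\Gamma\cup\GammaPin)$, which is immaterial.
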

\begin{proof}
    The first three cases are immediate from
    Theorem~\ref{thrm:unbounded} and Propositions \ref{prop:im-bis},
    \ref{prop:HIS-to-ORconj} and~\ref{prop:ORconj-to-HIS}.  Note that
    $\Gamma\cup \GammaPin$ is affine if, and only if, $\Gamma$ is.

    For the remaining case, suppose that $\Gamma$ is not affine,
    $\Gamma\nsubseteq
    \IMconj$, $\Gamma\nsubseteq \ORconj$ and $\Gamma\nsubseteq
    \NANDconj$. Since $\Gamma\cup \GammaPin$ is neither affine nor a
    subset of \IMconj{}, we have $\numCSP(\Gamma\cup
    \GammaPin)\APequiv \numSAT$ by Theorem~\ref{thrm:unbounded} so, if
    we can show that $\Gamma$ $d$-simulates equality, then
    $\numCSPd(\Gamma\cup \GammaPin) \APequiv \numCSP(\Gamma\cup
    \GammaPin)$ by Proposition~\ref{prop:bound-unbound} and we are
    done.  If $\Gamma$ contains a relation $R$ that is neither
    \ORconj{} nor \NANDconj{}, then $R$ 3-simulates equality by
    Theorem~\ref{thrm:trichotomy}.  Otherwise, $\Gamma$ must contain
    distinct relations $R_1\in\ORconj$ and $R_2\in\NANDconj$ that are
    non-affine so have width at least two, so $\Gamma$ 3-simulates
    equality by Lemma~\ref{lemma:OR-NAND-eq}.
\end{proof}

Sly has shown that there can be no FPRAS for the
problem of counting independent sets in graphs of maximum degree at
least~6, unless $\NPtime=\RPtime$ \cite{Sly2010:IS-inapprox}.  Clearly,
if there is no FPRAS for counting independent sets in
such graphs, there can be no FPRAS for $\numwHISd[d]$ with $w\geq 2$
and $d\geq 6$. Further, since $\numSAT$ is complete for $\numP$ with
respect to AP-reducibility \cite{DGGJ2004:Approx}, $\numSAT$ cannot
have an FPRAS unless $\NPtime=\RPtime$. Thus, Theorem~\ref{thmcor:main} below is an immediate corollary of
Theorem~\ref{theorem:complexity}.

\begin{theorem}
\label{thmcor:main}
    Let $\Gamma$ be a Boolean constraint language and let $d\geq 6$.
    \begin{itemize}
    \itemspacing
    \item If every $R\in\Gamma$ is affine, then $\numCSPd(\Gamma\cup \GammaPin) \in \FPtime$.
    \item Otherwise, if $\Gamma\subseteq \IMconj$, then $\numCSPd(\Gamma\cup \GammaPin) \APequiv \numBIS$.
    \item Otherwise, there is no FPRAS for $\numCSPd(\Gamma\cup \GammaPin)$, unless $\NPtime=\RPtime$.
    \end{itemize}
\end{theorem}

Note that $\Gamma \cup \GammaPin$ is affine (respectively, in \ORconj{}
or in \NANDconj{}) if, and only if, $\Gamma$ is.  Therefore, the case
for large-degree instances ($d\geq 6$) corresponds exactly in
complexity to the unbounded case \cite{DGJ2010:Bool-approx}.

For lower degree bounds, the picture is more complex. To put Theorem~\ref{theorem:complexity} in context, summarize what is known about the approximability of $\numwHISd$ for various values of $d$ and $w$.

The case $d=1$ is clearly in \FPtime{} (Theorem~\ref{thrm:degree-1})
and so is the case $d=w=2$, which corresponds to counting independent
sets in graphs of maximum degree two. For $d=2$ and width $w\geq 3$,
Dyer and Greenhill have shown that there is an FPRAS for $\numwHISd$
\cite{DG2000:IS-Markov}. For $d=3$, they have shown that there is an
FPRAS if the width $w$ is at most~3. For larger width, the
approximability of $\numwHISd[3]$ is still not known. With the width
restricted to $w=2$ (ordinary graphs), Weitz has shown that, for degree
$d\in \{3,4,5\}$, there is a deterministic approximation scheme that
runs in polynomial time (a PTAS) \cite{Wei2006:IS-threshold}. This
extends a result of Luby and Vigoda, who gave an FPRAS for $d\leq 4$
\cite{LV1999:Convergence}.  For $d>5$, approximating $\numwHISd$
becomes considerably harder. Dyer, Frieze and Jerrum showed that, for
$d=6$, the Monte Carlo Markov chain technique is likely to fail, in
the sense that a certain class of Markov chains are provably slowly mixing
\cite{DFJ2002:IS-sparse}. They also showed that, for $d=25$, there can
be no polynomial-time algorithm for approximate counting, unless
$\NPtime=\RPtime$. As mentioned above, Sly has recently improved on
this, showing that there can be no FPRAS for $d\geq 6$ unless
$\NPtime=\RPtime$.  Table~\ref{tab:aHIS-complexity} summarizes the
results.

\begin{table}[t]
\centering\renewcommand{\arraystretch}{1.15}
\begin{tabular}{|c|c|l|}
    \hline
    Degree $d$ & Width $w$ & Approximability of $\numwHISd[d]$ \\
    \hline
    $1$
        & $\geq 2$
        & Exact counting in \FPtime \\
    $2$
        & $2$
        & Exact counting in \FPtime \\
    $2$
        & $\geq 3$
        & FPRAS~\cite{DG2000:IS-Markov} \\
    $3$
        & $2,3$
        & FPRAS~\cite{DG2000:IS-Markov} \\
    $3,4,5$
        & $2$
        & PTAS~\cite{Wei2006:IS-threshold} \\
    $\geq6$
        & $\geq 2$
        & No FPRAS unless $\NPtime=\RPtime$~\cite{Sly2010:IS-inapprox} \\
    \hline
\end{tabular}
\caption{A summary of known approximability of $\numwHISd[d]$. For values of $d$ and $w$ not covered by the table, the approximability is still unknown.}
\label{tab:aHIS-complexity}
\end{table}

Returning to bounded-degree \numCSP{}, the case $d=2$ seems to have a
rather different flavour to higher degree bounds.  This is
also the case for decision CSP --- recall that the complexity of
degree-$d$ $\CSP(\Gamma\cup\GammaPin)$ is the same as unbounded-degree
$\CSP(\Gamma\cup\GammaPin)$ for all $d\geq 3$
\cite{DF2003:bdeg-gensat}, while degree-2 $\CSP(\Gamma\cup\GammaPin)$
is often easier than the unbounded-degree case
\cite{DF2003:bdeg-gensat,Fed2001:Fanout} but there are still
constraint languages $\Gamma$ for which the complexity of degree-2
$\CSP(\Gamma\cup\GammaPin)$ is open.

Our key techniques for determining the complexity of $\numCSPd(\Gamma
\cup \GammaPin)$ for $d\geq 3$ are the 3-simulation of equality and
Theorem~\ref{thrm:trichotomy}, which says that every Boolean relation
is in \ORconj{}, in \NANDconj{} or 3-simulates equality.  However, it
seems that not all relations that 3-simulate equality also 2-simulate
equality so the corresponding classification of relations does not
appear to hold.  It seems that different techniques will be required
for the degree-2 case.  For example, it is possible that there is no
FPRAS for \numBIS{} and, therefore, no FPRAS for $\numCSPd[3](\Gamma
\cup \GammaPin)$ except when $\Gamma$ is affine.
However, Bubley and Dyer have shown that there is an FPRAS
for the restriction of \numSAT{} in which each variable appears at
most twice, even though the exact counting problem is \numP{}-complete
\cite{BD1997:Graph-orient}; the corresponding constraint language is
not affine.  This also shows that there is a
class $\mathcal{C}$ of constraint languages for which
$\numCSPd[2](\Gamma \cup \GammaPin)$ has an FPRAS for every $\Gamma
\in \mathcal{C}$ but for which no exact polynomial-time algorithm
exists, unless $\FPtime=\numP$.

We leave the complexity of degree-2 \numCSP{} and of \numBIS{} and
the various parameterized versions of the counting hypergraph
independent sets problem as open questions.


\bibliographystyle{plain}
\bibliography{bdegbool}

\end{document}